\documentclass[aps,prd,preprint,nofootinbib]{revtex4-2}
\usepackage[pdftex]{graphicx,xcolor,hyperref}
\usepackage{amsmath,amssymb,bm,amsthm,mathtools}
\usepackage{amsfonts}
\usepackage{ascmac}
\usepackage{array}
\usepackage{slashed}
\usepackage{booktabs} 
\usepackage{tikz}

\newtheorem{proposition}{Proposition}
\newtheorem{lemma}{Lemma}

\DeclareMathOperator{\spec}{Spec}

\date{\today}
\begin{document}

\title{Symmetric Mass Generation for Domain-Wall Fermions}
\author{Sho Araki}\email{araki@het.phys.sci.osaka-u.ac.jp}
\author{Hidenori Fukaya}
\email{hfukaya@het.phys.sci.osaka-u.ac.jp}
\author{Tetsuya Onogi}
\email{onogi@het.phys.sci.osaka-u.ac.jp}
\author{Satoshi Yamaguchi}
\email{yamaguch@het.phys.sci.osaka-u.ac.jp}
\affiliation{Department of physics, The University of Osaka, Toyonaka 560-0043, Japan}
\preprint{OU-HET-1322}
\begin{abstract}
We report the first numerical examination of symmetric mass
generation (SMG) for domain-wall fermions. Our simulation on a
two-dimensional Euclidean lattice computes the path-integral
of the Fidkowski–Kitaev Majorana chain with two
boundaries, corresponding to the two (physical and mirror) walls.
We demonstrate by evaluating the lattice Dirac operator eigenvalue spectrum
that the local four-Fermi interaction selectively gaps the edge fermions on the mirror wall
while leaving the physical wall intact. We further identify a
characteristic signature of SMG in the path-integral: the gap
of the Dirac operator opens along the imaginary axis of the
complex spectrum, in contrast to the standard mass shift in the real direction.
The conventional hybrid Monte Carlo algorithm remains practical with reweighting the complex phase of the Pfaffian,
which is found to be remarkably small throughout the simulation.
These results provide a practical nonperturbative framework for 
interaction-induced mirror-fermion decoupling of lattice domain-wall fermions.
\end{abstract}
\maketitle

\section{introduction} \label{sec:intro}

Domain-wall fermions \cite{Kaplan:1992bt, Shamir:1993zy}
(as well as the overlap fermions \cite{Neuberger:1997fp},
which are obtained by integrating out the bulk part of domain-wall fermions)
have played
a crucial role in understanding chiral symmetry \cite{Ginsparg:1981bj,Luscher:1998pqa} in lattice gauge theory.
However, their application has been limited to vectorlike gauge theories
and constructing lattice chiral gauge theories remains a challenge,
except for a few successful examples
where the gauge group is (or contains) $U(1)$ \cite{Luscher:1998du,Kikukawa:2000kd}.

\newpage
This difficulty comes from the fact that we need to have,
at least, two domain-walls on a standard square lattice\footnote{
A single domain-wall is possible when it is allowed to be curved.
See Refs.~\cite{Aoki:2022cwg, Aoki:2022aez, Kaplan:2023pxd, Kaplan:2023pvd, Aoki:2024bwx, Clancy:2024bjb}
for details.
},
where both of the left- and right-handed fermions appear in pairs.
Consider a set of Weyl fermions localized on one of the two domain walls, which we call the physical wall.
Then we have the same set
but with the opposite chirality sitting on another wall
which we denote the mirror wall.
The system is, thus, automatically vectorlike, rather than chiral.
In order to solve this so-called mirror fermion problem\cite{Eichten:1985ft,Creutz:1996xc},
we need to somehow decouple the fermions on the mirror wall from the theory.

Symmetric mass generation (SMG) \cite{Fidkowski:2009dba,Fidkowski:2010jmn, Wen:2013ppa, Wang:2013yta,You:2014oaa,Ayyar:2014eua,BenTov:2014eea,You:2014vea,Catterall:2015zua,Ayyar:2015lrd, He:2016sbs,Ayyar:2016lxq,You:2017ltx,Kikukawa:2017gvk,Wang:2018jkc,Razamat:2020kyf,Tong:2021phe,Butt:2021koj,Zeng:2022grc,Wang:2022ucy,Liu:2023msa,Butt:2024kxi,Golterman:2025boq,Maiti:2026log,Li:2026gvn, Hasenfratz:2026orf,butt2026searchingsymmetricmassgeneration}
has attracted considerable attention in condensed matter and particle physics.
When the fermions are free of gauge and 'tHooft anomalies \cite{tHooft:1979rat},
SMG may give a mass to these fermions by multi-point fermion interactions
without breaking the symmetry that prohibits the standard
bi-linear mass term.
If SMG can be implemented for domain-wall fermions, namely by selectively gapping the mirror-wall fermions while leaving the physical wall intact, it would provide a solution to the mirror-fermion problem.

Application of SMG to domain-wall fermions has been proposed in the literature
\cite{Wen:2013ppa, Wang:2013yta, You:2014oaa, You:2014vea, Wang:2018jkc}
but no nonperturbative numerical study has been
reported\footnote{
Kikukawa \cite{Kikukawa:2017gvk} reported mirror fermion decoupling
using the overlap Dirac operator in numerical simulations of two-dimensional chiral fermion models,
rather than domain-wall fermions. 
}\footnote{Zeng et al. \cite{Zeng:2022grc} simulated a similar set up with two boundaries
and were successful in gapping out the mirror fermions on one edge.
However, the two edges are separated by only one lattice spacing
and the symmetry is still exact, so that we consider the formulation in \cite{Zeng:2022grc}
as a kind of staggered fermions.
In contrast, the chiral symmetry of domain-wall fermions improves
exponentially as the separation between the two walls increases.}.
Previous numerical studies employed lattice fermion formulations with multiple low-energy poles,
including honeycomb-lattice models \cite{He:2016sbs,Liu:2023msa} and staggered fermions \cite{Ayyar:2014eua,Catterall:2015zua,Ayyar:2015lrd,Ayyar:2016lxq, Butt:2021koj,Butt:2024kxi}.

In this work, we simulate eight-flavor
Majorana domain-wall fermions on two-dimensional lattices.
We numerically verify by the lattice Dirac operator spectrum
that the SMG by a set of four-Fermi interaction
works only on the mirror wall, while keeping the massless fermion Dirac spectrum on
the physical wall unchanged.
Our simulation corresponds to the Euclidean path integral of
the Fidkowski-Kitaev(FK) Majorana chain \cite{Fidkowski:2009dba,Fidkowski:2010jmn}.
Although this model is not a chiral gauge theory,
the edge-localized modes, which are eigenstates of $\gamma^1$ related to the reflection symmetry
instead of $\gamma^5$,
have exactly the same mechanism that they cannot have the mass term.
The aim of this work is to establish a solid benchmark of SMG in the path-integral formalism,
as a solution to the mirror fermion problem of domain-wall fermions.

The remainder of this paper is organized as follows.
We first describe the eight-flavor Majorana fermion theory converting
the four-Fermi interaction into the Yukawa interaction
with auxiliary scalar fields.
Then we show that the standard hybrid Monte Carlo algorithm
with Pfaffian phase reweighting
is practical, as the complex phase is found to be remarkably small.
We study the lattice Dirac operator spectrum without domain walls
and identify a characteristic
signature of SMG in this path-integral formalism: the spectral gap opens along the imaginary
axis rather than through
a conventional real mass shift. Finally, we introduce domain walls and apply the four-fermion
interaction only near the mirror wall, demonstrating that the mirror fermions are
gapped while the physical-wall spectrum remains gapless.

\section{Two-dimensional eight-flavor Majorana fermions}

We consider the FK model \cite{Fidkowski:2009dba}\footnote{
The original FK Hamiltonian (in its continuum limit)
is recovered
by the Legendre transformation of the Lagrangian above with respect to
$\chi^i = (\eta^i, \bar{\eta}^i)^T$
(See Appendix~\ref{app:originalFK} for the details).
} with eight-flavor Majorana fermions \cite{Tong:2019bbk} 
on a two-dimensional Euclidean flat torus whose action is given by
\begin{equation}
  \label{eq:action}
  S =\int d^2x \left[\frac{1}{2} \chi_i^T C(\gamma^\mu \partial_\mu +m)\chi^i
    -\frac{g_S^2}{2}\left(\frac{1}{2}\chi^T_i C\chi^i \right)^2
    -\frac{g_P^2}{2}\sum_{\alpha=1}^7\left(\frac{1}{2}\chi^T_i C i \gamma^3 [\Gamma^\alpha]^i_j \chi^j \right)^2\right].
\end{equation}
Here $i,j=1,\cdots 8$ are flavor indices,
\begin{align}
  \gamma^1=\left(
  \begin{array}{cc}
    0 & 1\\
    1 & 0\\
  \end{array}
  \right),\;\;\;
    \gamma^2=\left(
  \begin{array}{cc}
    0 & -i\\
    i & 0\\
  \end{array}
  \right),\;\;\;
   \gamma^3=\left(
  \begin{array}{cc}
    1 & 0\\
    0 & -1\\
  \end{array}
  \right),\;\;\;
  C=i\gamma^2,
\end{align}
which act on the two-component  spinor $\chi^i$,
and the seven $8\times 8$ flavor matrices (following the notation by \cite{You:2014vea}),
are defined by
$
 \{\Gamma^\alpha\} =\left\{\sigma^{123},\sigma^{203},\sigma^{323},\sigma^{211},\sigma^{021},\sigma^{231},\sigma^{002}\right\},
$
where $\sigma^{abc}=\sigma^a\otimes \sigma^b\otimes \sigma^c$ with the Pauli matrices $\sigma^{a=1,2,3}$ and $\sigma^0=1$.
Note that $\Gamma^\alpha$ are all pure imaginary anti-symmetric matrices,
which gives the theory a global $SO(7)$ symmetry.

Instead of the original time-reversal symmetry in FK,
we have the reflection symmetry in the Euclidean time direction \cite{Shiozaki:2017ive}:
the action $S$ is invariant under 
$
\chi(x_1,x_2) \to -\gamma^3\gamma^2 \chi(x_1,-x_2),
$
which is an element of $Pin^-(2)$ group denoted by $R_t$.
The Majorana masses or any bilinear operators
$\eta^i\eta^j$ or $\bar{\eta}^i\bar{\eta}^j$ are not allowed with this symmetry,
while the Dirac mass $m \chi^TC \chi$ is allowed.
In the massless limit, $m\to 0$,
$S$ is also
invariant under 
$
\chi(x_1,x_2) \to \gamma^3 \chi(x_1,x_2),
$ 
which is a (discrete) chiral symmetry.
The four-Fermi interaction 
respects both symmetries.
In fact, we would like to examine the SMG in two ways:
1) The chiral SMG: we scan the uniform mass parameter $m$ without domain-walls
to examine continuity between the topological sector 0 and 8 without closing the mass gap.
This provides a nonperturbative test of the $\mathbb{Z}_8$ classification of symmetry-protected topological phases.
2) The reflection SMG : with two domain walls, we examine 
whether the edge-localized fermions on the mirror wall only can be gapped out by the same four-Fermi interactions. 

In this work, we regularize the Majorana fermion theory
on a square lattice with the lattice spacing $a$,
employing the massive Wilson Dirac operator $D_W+m$
with and without domain-walls\footnote{
In our previous work \cite{Araki:2025xly}, we have confirmed
in the free Wilson fermion theory with $g_P^2=0$
that the $\mathbb{Z}_8$ structure of the Pfaffian phase
(for a single flavor fermion)
is well reproduced on the lattice.
}.
We set $g_S^2=0$, which is irrelevant for SMG.
According to \cite{Fidkowski:2009dba}, we expect that SMG works for $g_P^2>0$ and does not for $g_P^2\le 0$.
We convert the four-Fermi term into
the Yukawa interaction with the auxiliary field $\pi^\alpha(x)$ on each site $x=(x_1,x_2)$
to integrate out the Majorana fields and obtain a path integral over $\pi^\alpha(x)$:
\begin{align}
  Z =&\int \prod_{x,\alpha} d\pi^\alpha(x) {\rm Pf}\left[C\left(D_W + m + \sqrt{g_P^2}\sum_{\alpha=1}^7 \pi^\alpha(x) i \gamma_3 \Gamma^\alpha\right)\right]
  \exp\left[ - \frac{1}{2}\sum_x \sum_{\alpha=1}^7 (\pi^\alpha(x))^2   \right], 
\end{align}
where ${\rm Pf}[\cdots]$ denotes the Pfaffian of the eight flavor Dirac operator.
In this expression, the SMG looks like a Higgs mechanism,
but it is different in that
the vacuum expectation value of $\pi^\alpha$ must be zero,
and there is no massless (Nambu-Goldstone) particle in the system.
Note that the $R_t$ symmetry is exact in our lattice formulation.


\section{Lattice simulation setup}

In our lattice simulations, we investigate the mass gap in terms of
the spectrum of the Dirac operator (including the Yukawa term):
\begin{equation}
 D_Y:= D_W + m + \sqrt{g_P^2}\sum_{\alpha=1}^7 \pi^\alpha(x) i \gamma_3 \Gamma^\alpha,
\end{equation}  
as well as that of $D_Y^\dagger D_Y$.
The gap in $D_Y^\dagger D_Y$ guarantees that any fermion
correlator exponentially decays at sufficiently long distances,
which is discussed in Appendix~\ref{app:gap}.

We fix the spatial lattice size to 12
and vary the temporal size in the range $[8,32]$, depending on the parameters.
We impose the periodic boundary conditions in both of
$x_1=x$ and $x_2=t$ directions. 
When we switch off the domain-wall, we simulate with uniform mass values at $m=0.5, 0, -1.5,-2$
and put $\pi^\alpha(x)$ field on every site.
For the domain-wall fermion case, we put $m=-1$ for $4\le x\le 9$, and $+1$, otherwise.
We put $\pi^\alpha(x)$ field only on the $x=8,9,10,11$ slices near the mirror wall.
For the Yukawa coupling, we simulate different values of $g_P^2$ in the range $[-0.05,0.25]$.

We employ the hybrid Monte Carlo (HMC) algorithm ignoring the phase of the Pfaffian:
$|{\rm Pf}[CD_Y]|=|\det(D_Y^\dagger D_Y)|^{1/4}=\exp[{\rm Tr}\ln (D_Y^\dagger D_Y)/4]$.
For our simulations, we partly use JuliaQCD package \cite{Nagai:2024yaf} with some modifications.
We do not use the pseudo fermions but directly
compute the force from the fermion action,
 as well as
 the Pfaffian itself for reweighting the phase,
 in a brute force way.
 The Pfaffian phase fluctuates very mildly around zero:
 its standard deviation is
 at most $\sqrt{\langle \Delta \theta^2\rangle}=$0.015(3) radian
 at $g_P^2=0.1$ on our simulated lattices.
We confirm that the phase effect is negligible 
compared to the statistical errors.
We simulate 1000--3000 trajectories for each ensemble and
sample configurations per 10 trajectories after
discarding 200--1000 trajectories for thermalization.
The statistical errors are estimated by the jackknife resampling method.

We note here that the scalar one-point function $\langle \pi^\alpha(x)\rangle$
is consistent with zero for all simulated ensembles and there is no
sign of spontaneous breaking of symmetries.

\section{Numerical results}

\subsection{Chiral SMG without domain-walls}

First, we examine the chiral SMG by simulations
with uniform mass $m$ without domain-walls.
In continuum theory, the chiral symmetry is manifest at $m=0$,
while there is an additive mass renormalization correction on a lattice.
Therefore, we scan the mass in a range $-2\leq m \leq 0.5$ to examine the chiral SMG.

For the free case $g_P^2=0.0$, the eigenvalue spectrum of $D_Y$ with $m=0$
is plotted on the left panel of Fig.~\ref{fig:Dspectrum}.
Note that adding nonzero mass $m$ shifts
the whole spectrum along the real axis, setting the new origin at $-m$. 
The region $0<-m<2$, whose $\mathbb{Z}$ topological number is $k=+8$
and the region $2<-m<4$ with $k=-8$ are clearly separated from the trivial phase $k=0$ for $-m<0, 4<-m$,
by the eigenvalues at $0$ with eight multiplicity, and those at $2$ with sixteen degeneracies
and the ones at $4$ with eight degeneracies.
When $-m$ cross these points, these modes close the gap.

For the $g_P^2>0$ case, we find that the spectrum is deformed
in the imaginary direction and the eigenvalues are moved away from the real axis,
as shown in the middle panel of Fig.~\ref{fig:Dspectrum},
where the result for a typical configuration in each ensemble is plotted.
We see that a clear gap opens in the imaginary direction
at all simulated masses $m=0.5, 0,-1.5,-2$ at $g_P^2=0.2$.
This is a striking feature of SMG found only in the path-integral formalism
where the Dirac operator is a complex operator,
in contrast to the standard mass term shifting the spectrum in the real direction.
It is obvious that $k=0,8,-8$ phases are all smoothly
connected without closing the gap, which confirms the reduction $\mathbb{Z} \to \mathbb{Z}_8$ 
of the topological phases.

When $g_P^2<0$, the Yukawa interaction term is Hermitian,
rather than anti-Hermitian. 
The Dirac spectrum is deformed in the real direction,
keeping some of the eigenmodes on the real axis, as presented in the right panel of Fig.~\ref{fig:Dspectrum}.
In this case, SMG is not available, which is consistent with \cite{Fidkowski:2009dba}.
We find that the real eigenvalues are scattered along the real axis
to make the $k=0,+8,-8$ regions unclear.
This may be a signal of the first-order phase transition suggested in \cite{Fidkowski:2009dba}.

\begin{figure*}[tbhp]
\begin{center}
  \includegraphics[width=1.0\textwidth]{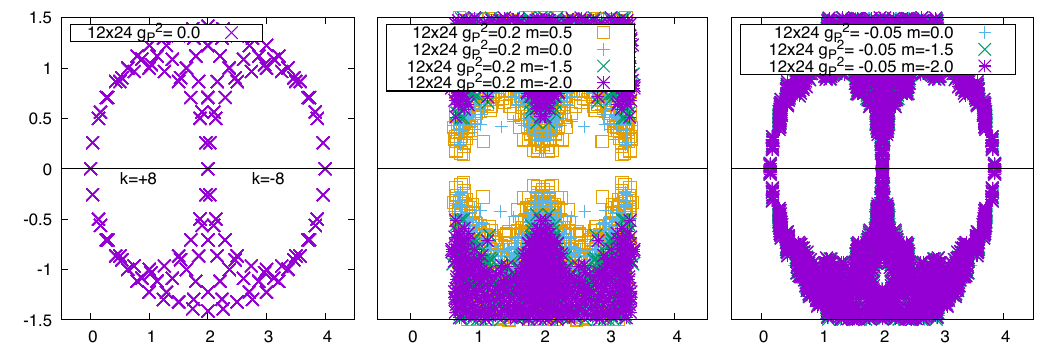}
  \caption{Eigenvalue spectrum of $D_Y(m=0)$ at $g_P^2=0.0$ (left panel), 0.2 (middle) and $-0.05$ (right).
    A clear gap in the imaginary direction is observed when $g_P^2>0$.
    For nonzero $g_P^2$, data for a typical configuration generated at different simulated values of $m$ are plotted.
  Changing mass $m$ corresponds to shifting the origin to $(-m,0)$.
  }
  \label{fig:Dspectrum}
\end{center}
\end{figure*}

\subsection{Reflection SMG with domain-walls}

Next let us examine the reflection SMG with two domain-walls located at $x=3.5$ (physical wall) and 9.5 (mirror wall)
where we make the scalar field active only near the mirror wall in the region $8\le x\le 11$.

In the free fermion case with $g_P^2=0$, the edge-localized
modes on the domain-walls
are eigenstates of $\gamma^1$ in the large volume limit
between the domain-walls.
For these modes, the mass term $m\chi^T C \chi$ disappears
and therefore, the spectrum of $D_Y^\dagger D_Y$ is gapless, due to the reflection symmetry.
Since we have 8 flavors, we have 16 such zero modes on two domain-walls.
Note that the mechanism is exactly the same as the standard domain-wall fermion
in five-dimensions where the edge modes are the eigenstates of $\gamma^5$. 

As the top panel of Fig.~\ref{fig:DdagDspectrum} shows, where the lowest eigenvector's amplitude $v^\dagger(x)v(x)$
on a typical configuration at $g_P^2=0.15$ on a $12\times 12$ lattice is plotted,
we confirm that 
most of the low-lying modes are localized either at the physical domain-wall or mirror domain-wall\footnote{There are some exceptional configurations
where a few modes have amplitudes near the both walls. We consider them as a tunneling effect when the mirror and physical modes
are accidentally degenerate.}.
Therefore, we can classify them into the physical modes and mirror modes by bias of the amplitude distribution.
In the bottom panel of Fig.~\ref{fig:DdagDspectrum}, we plot the lowest (circle symbols) and eighth (squares) mirror mode
eigenvalues of $D_Y^\dagger D_Y$ 
and those on the physical wall (cross symbols) as functions of $g_P^2$.
The gradation represents the expectation value of $\gamma^1$ of each eigenmode.
We observe that the nonzero eigenvalues have almost four-fold multiplicities (see Appendix~\ref{app:4-fold} for details).

\begin{figure*}[tbhp]
  \begin{center}
        \includegraphics[width=0.8\textwidth]{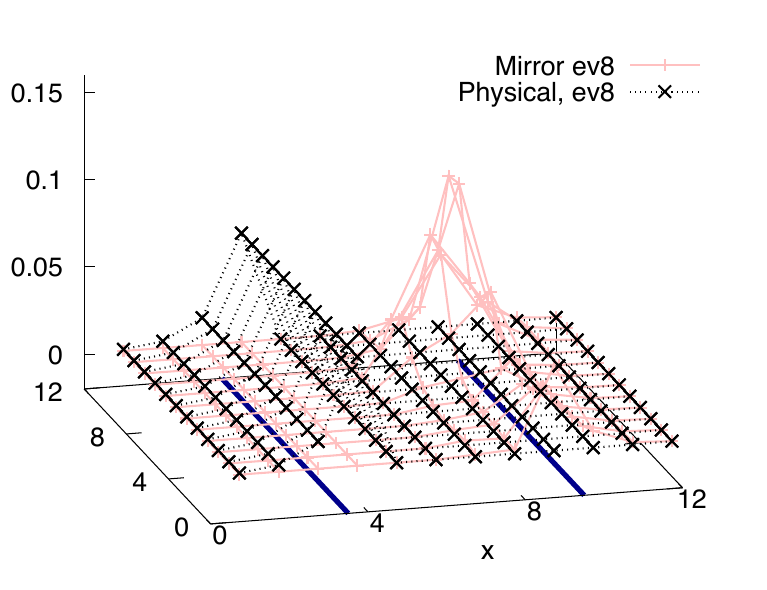}
    \includegraphics[width=0.8\textwidth]{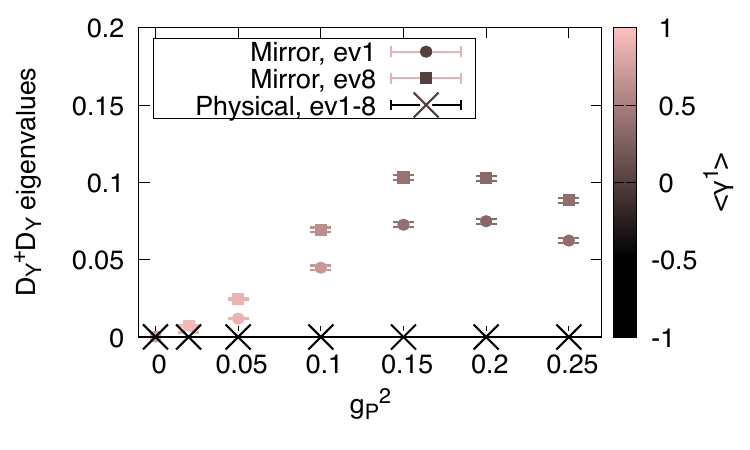}
    \caption{Top: The eighth eigenvector amplitude with $\langle \gamma^1 \rangle< 0 $ (dashed cross symbols)
    and that with $\langle \gamma^1 \rangle>0 $ (solid plus symbols) of a typical configuration at $g_P^2=0.15$ on the $12\times 12$ lattice.
    Thick lines at $x=3.5, 9.5$ show the locations of the domain-walls.
    Bottom: the spectrum of the lowest and eighth eigenvalues of $D_Y^\dagger D_Y$ on the $12\times 24$ lattice
    in the mirror sector
     (circle and square symbols, respectively)
    and those on the physical wall (crosses) as functions of $g_P^2$.
    The gradation represents the expectation value $\langle \gamma^1 \rangle$ of each eigenmode.
    }
    \label{fig:DdagDspectrum}
    \end{center}
\end{figure*}
\begin{figure*}[tbhp]
    \begin{center}
  \includegraphics[width=0.8\textwidth]{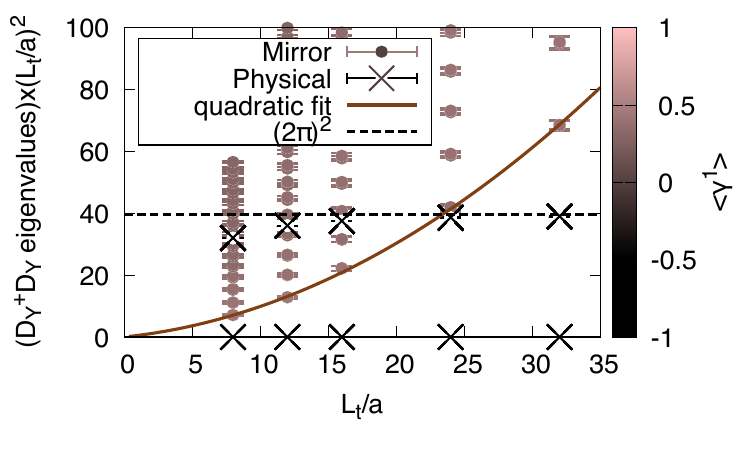}
  \caption{The eigenvalue spectrum of $D_Y^\dagger D_Y$ multiplied by $(L_t/a)^2$ as a function of $L_t/a$ at $g_P^2=0.15$.
    The physical fermion spectrum (cross symbols) is consistent with the continuum theory $(2\pi n)^2$,
    as the dashed line shows for $n=1$, which is stable against change of $L_t/a$.
    The mirror fermion eigenvalues (circles) open a gap, which scales as a function of $1/a^2$
    as represented by the solid fit curve.
  }
  \label{fig:a-scaling}
\end{center}
\end{figure*}

The gap of the mirror modes denoted by the circle and square symbols
increases as $g_P^2$, while cross symbols with eight
degeneracy stay near zero.
The physical mode's amplitude is localized near the location of the physical wall $x=3.5$
and uniformly distributed along the $t$ direction, while
the mirror modes show local lump-like structures on the mirror wall at $x=9.5$.
These data indicate that SMG by the four-Fermi interaction
selectively gaps the edge fermions on the mirror wall
while leaving the physical eigenvalue spectrum intact.

Moreover, we find that the generated gap scales as the function of $(L_t/a)^2$.
In Fig.~\ref{fig:a-scaling}, we plot the eigenvalue spectrum of
$D_Y^\dagger D_Y$ at $g_P^2=0.15$, multiplied by $(L_t/a)^2$, as a function of $L_t/a$.
The lowest eigenvalue of the mirror fermions, which are plotted by circle symbols,
scales as a function of $(L_t/a)^2$ with an $O(a)$ correction,
shown by the solid fit curve $[c L_t/a]^2(1+d a/L_t)$ with the fit parameters $c=0.230(2)$ and $d=8.7(3)$.
The physical mode spectrum (cross symbols), on the other hand, have little change
against change of $L_t/a$, which is consistent with the continuum prediction $(2\pi n)^2$.
If this $1/a^2$ scaling persists until the continuum limit,
we will be able to decouple the mirror fermions from the theory.



\section{Summary}

We have simulated eight-flavor massive Majorana fermions on a two-dimensional square lattice
with and without domain-walls, which corresponds to the path integral realization
of the Fidkowski-Kitaev's Majorana chain model.
In our simulation, the four-Fermi interaction was converted to the Yukawa interaction
with the auxiliary scalar field.
We have shown that the standard hybrid Monte Carlo method works
with reweighting the complex phase of the fermion Pfaffian,
whose fluctuation turned out to be remarkably small.

First, using the lattice Dirac operator eigenvalues without domain-walls,
we have found a striking signature of SMG in the path integral formalism,
where the gap opens in the imaginary direction of the complex eigenvalue spectrum,
in contrast to the standard mass shift in the real direction.

Then we have demonstrated that SMG works for the domain-wall fermions,
selectively gapping out the
fermions localized on the mirror wall only,
while keeping the physical wall fermion spectrum intact.
We have also confirmed that the mirror fermion spectral gap scales as $(L_t/a)^2$, while
keeping the physical eigenvalues almost constant.
These numerical spectrum of $D_Y^\dagger D_Y$ strongly
supports factorization of the massless free Majorana fermion Pfaffian
localized on the physical wall
and that of the massive fermions on the mirror wall.

In order to fully verify decoupling of the physical and mirror walls,
we need to study finite size systematics
between the two domain-walls, including measurement of the residual mass.
For this study, we give an infra-red cutoff to the physical wall
fermions,
by imposing the anti-periodic boundary condition in the temporal direction,
to make the inverse of $D_Y^\dagger D_Y$ well under control,
and investigate the finite size scaling of the fermion and boson correlators.
We find that the mirror fermions and scalar fields at $g_P^2=0.15$
have non-zero masses both in the $x$ and $t$ directions,
which are insensitive to the volume,
while the physical wall fermions
keep long-range correlation in the temporal direction.
We will report the details in a separate paper \cite{Araki:2026prep}.
 
\acknowledgments
The authors thank Lukasz Fidkowski, Yoshio Kikukawa, Ken Shiozaki, Cenke Xu and
Tatsuya Yamaoka for fruitful discussions.
We also thank Yuki Nagai and Akio Tomiya for their instruction on JuliaQCD code.
We thank the  Galileo Galilei Institute for Theoretical Physics,
where discussions at the workshop 
"Defects and Extended Excitations in Quantum Field Theory, Quantum
Matter and Statistical Models" were helpful.
We also thank the Yukawa Institute for Theoretical Physics at Kyoto University,
where this work was almost finalized during
the workshops, ``One-day workshop on symmetric mass generation'' (YITP-X-26-04),
and ``Frontiers of Lattice Fermions'' (YITP-W-26-06).
This work was
supported in part by JSPS KAKENHI Grants No.
JP23K22490, JP25K07283, and 26K07086, Japan.
This work was also supported in part by JST SPRING Grants No.
JPMJPS2138, Japan.

\appendix

\section{The original Fidkowski-Kitaev model}
\label{app:originalFK}

Here we recover the original Fidkowski-Kitaev model \cite{Fidkowski:2009dba}
from our Lagrangian given in Eq.~(\ref{eq:action}) (see also \cite{Tong:2019bbk}).
The Hamiltonian 
is obtained by the standard Legendre transformation with respect to $\chi^i = (\eta^i, \bar{\eta}^i)^T$
and its conjugate momenta.
Then we can rewrite the fields using the triality of the two 8-dimensional spinors
and a 8-dimensional vector representations of $SO(8)$ group, through the bosonization technique,
to obtain the manifestly $SO(7)$ invariant interaction term of the Hamiltonian
with the new spinor fields $\psi_j$ and $\bar{\psi_j}$:
\begin{equation}
H_{\rm int} = A \left(\sum_{j=1}^7 \psi_j\bar{\psi_j} \right)^2+B\left(\sum_{j=1}^7 \psi_j\bar{\psi_j} \right)\psi_8\bar{\psi_8},
\end{equation}  
where the coefficients are related by
\begin{equation}
g_P^2=\frac{1}{4}A-\frac{1}{8}B,\;\;\; g_S^2=\frac{5}{4}A+\frac{3}{8}B.
\end{equation}

According to \cite{Fidkowski:2009dba}, SMG works in the parameter region $A>B/2$ with $B<0$.
In this work, we fix $g_S^2=0$, which is irrelevant to SMG,
and consider the $A=-3B/10$ line.
We expect that SMG works for $g_P^2=-B/5>0$ and does not for $g_P^2\le 0$.


\section{Gap of $D_Y^\dagger D_Y$ and mass}\label{app:gap}

Suppose that $D_Y^\dagger D_Y$ has a gap $M_0^2>0$, which is sufficiently
larger than the inverse volume size $1/L$,
for any field configurations of the scalar $\pi^\alpha$ generated by a HMC simulation. 
Here we show that the gap guarantees that
fermion correlator exponentially decays
at sufficiently long distances.

Let us denote the fermion two-point function by
\begin{equation}
S_F(x-z):=\langle x|(CD_Y)^{-1}|z\rangle =  \langle x|(D_Y^\dagger D_Y)^{-1}\sum_y |y\rangle \langle y|D_Y^\dagger C^{-1}|z\rangle,
\end{equation}
where we have suppressed irrelevant spinor or flavor indices.
Since $D_Y^\dagger C^{-1}$ is a local differential operator, $\langle y|D_Y^\dagger C^{-1}|z\rangle$
has a finite support only nearby $y=z$.
Therefore, it is enough to consider $\langle x|(D_Y^\dagger D_Y)^{-1}|y\rangle$ for large $|x-y|$.

According to \cite{HernandezJansenLuscher},
exponential decay of inverse holds for
a general positive bounded ultra-local operator on a $d$-dimensional square lattice,
using the taxi-driver distance
\begin{equation}
  d_1(x,y):=\lVert x-y\rVert_1
  =\sum_{j=1}^d|x_j-y_j|.
\end{equation}
Suppose $H$ to be such a lattice operator, whose eigenvalues are
in a finite range $[u,v]$ where $0<u<v<\infty$.
We further assume that there exists a positive number $R$
such that the kernel $H(x,y)=\langle x|H|y\rangle=0$
whenever $d_1(x,y)>R$.
Define the kernel of the inverse by
\begin{equation}
  G(x,y):=\langle x|H^{-1} |y\rangle.
\end{equation}
The norm $\lVert G(x,y)\rVert$ below is the operator norm on the internal
space.
In our case, $H=D_Y^\dagger D_Y$, $R=4$, $u=M_0^2$ and $v$ is the maximal eigenvalue of $D_Y^\dagger D_Y$.

\begin{proposition}[Exponential locality of the inverse]\label{prop:local}
  With the above assumptions, $G(x,y)$ satisfies a bound
  \begin{equation}
    \lVert G(x,y)\rVert
    \leq \frac{1}{u}\exp\!\bigl[-\mu_0 d_1(x,y)\bigr]
    \qquad (x,y\in\Lambda),
    \label{eq:main-bound}
  \end{equation}
  where
  \begin{equation}
    \mu_0
    =\frac{1}{R}\log\!\left(\frac{v+u}{v-u}\right)>0.
    \label{eq:mu-geometric}
  \end{equation}
  In particular, both the prefactor and the localization length are uniform
  whenever $R,u,v$ are uniform.
\end{proposition}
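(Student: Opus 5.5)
The plan is the standard Neumann-series (polynomial approximation) argument. Set $\kappa:=2/(u+v)$ and write
\begin{equation}
H^{-1}=\kappa\,(1-K)^{-1},\qquad K:=1-\kappa H .
\end{equation}
Since the spectrum of $H$ lies in $[u,v]$ and $H$ is Hermitian (it is $D_Y^\dagger D_Y$ in our application), the spectrum of $K$ lies in $[1-\kappa v,\,1-\kappa u]=[-q,q]$. Here
\begin{equation}
q=\frac{v-u}{v+u}<1 ,
\end{equation}
so $\lVert K\rVert\le q$. The Neumann series $H^{-1}=\kappa\sum_{n\ge0}K^n$ therefore converges in operator norm. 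This gives the kernel expansion $G(x,y)=\kappa\sum_{n\ge 0}K^n(x,y)$, where $K^n(x,y)$ denotes the internal-space block of $K^n$.

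The key step is locality of the powers of $K$. Because $H(x,y)=0$ for $d_1(x,y)>R$, and the identity is diagonal, $K(x,y)$ vanishes for $d_1(x,y)>R$. By induction using the triangle inequality for $d_1$, $K^n(x,y)=0$ whenever $d_1(x,y)>nR$. Hence only terms with $n\ge n_0:=\lceil d_1(x,y)/R\rceil$ contribute. Each block satisfies $\lVert K^n(x,y)\rVert\le\lVert K^n\rVert\le q^n$, since a matrix block of an operator is bounded by the full operator norm. Summing the geometric tail gives
\begin{equation}
\lVert G(x,y)\rVert\le\kappa\sum_{n\ge n_0}q^n=\frac{\kappa}{1-q}\,q^{n_0}
=\frac{1}{u}\,q^{n_0}\le\frac1u\,q^{d_1(x,y)/R},
\end{equation}
using $\kappa/(1-q)=\frac{2}{u+v}\cdot\frac{u+v}{2u}=1/u$ and $q<1$, $n_0\ge d_1/R$. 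Writing $q^{d_1/R}=\exp[-\mu_0 d_1]$ with $\mu_0=\frac1R\log\frac{v+u}{v-u}$ reproduces \eqref{eq:main-bound} exactly, including the prefactor $1/u$. Uniformity in $R,u,v$ is then manifest, since no other quantity enters.

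The main obstacle is modest: justifying $\lVert K\rVert\le q$. This requires $H$ to be Hermitian (normal), so that its spectral radius equals its norm. That holds for $D_Y^\dagger D_Y$, and I would state it explicitly as part of the assumption "positive bounded". On the periodic finite lattice the distance $d_1$ should be read as the torus distance. The support induction works verbatim with this distance, because the torus distance also obeys the triangle inequality. The case $v=u$ is trivial, since then $H=u\cdot 1$ and $G$ is diagonal, so it can be excluded or handled by the convention $\mu_0=\infty$.
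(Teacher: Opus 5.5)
Your proposal is correct and is essentially the paper's own proof. Your $\kappa$, $q$ and $K=1-\kappa H$ are the paper's $1/c$, $t$ and $tZ$, and the remaining steps coincide with the paper's polynomial-locality lemma applied to $F_k(z)=z^k$: the Neumann expansion, the vanishing of the $n$-th power beyond distance $nR$, the blockwise bound by the operator norm, and the geometric tail with $\kappa/(1-q)=1/u$. Your remarks on self-adjointness and on reading $d_1$ as the periodic distance usefully make explicit hypotheses the paper leaves implicit.
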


The proof rests on the following abstract version of the finite-propagation
argument used in Section~2.2 of Ref.~\cite{HernandezJansenLuscher}.

\begin{lemma}\label{lem:polynomial-locality}
  Let $Z$ be a self-adjoint and ultralocal operator 
  and suppose that
  \[
    \spec(Z)\subset[-1,1].
  \]
  For each $k\geq0$, let $F_k$ be a polynomial of degree at most $k$ such that
  \[
    \sup_{|z|\leq1}|F_k(z)|\leq1.
  \]
  Let $K>0$ and $0<t<1$, and define the norm-convergent series
  \begin{equation}
    \mathcal G:=K\sum_{k=0}^{\infty}t^kF_k(Z).
    \label{eq:G-series}
  \end{equation}
  Then
  \begin{equation}
    \lVert \mathcal G(x,y)\rVert
    \leq \frac{K}{1-t}
      \exp\!\bigl[-\mu d_1(x,y)\bigr],
    \qquad
    \mu=-\frac{1}{R}\log t>0.
    \label{eq:lemma-bound}
  \end{equation}
\end{lemma}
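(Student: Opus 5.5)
The plan is to exploit the finite range of $Z$ (taken, as in the setting of Proposition~\ref{prop:local}, to have range $R$ in the taxi-driver distance $d_1$) to show that each term $F_k(Z)$ of the series is itself strictly local with support growing linearly in $k$. Then for fixed $x,y$ only the tail of the series survives, and it is bounded by a geometric sum.

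\textbf{Step 1 (support of powers).} By induction on $n$, the kernel of $Z^n$ vanishes whenever $d_1(x,y)>nR$. Indeed,
\begin{equation*}
Z^{n}(x,y)=\sum_w Z^{n-1}(x,w)Z(w,y),
\end{equation*}
and a nonzero term requires $d_1(x,w)\le (n-1)R$ and $d_1(w,y)\le R$. The triangle inequality for $d_1$ then gives $d_1(x,y)\le nR$. Since $F_k$ is a polynomial of degree at most $k$, linearity gives $F_k(Z)(x,y)=0$ whenever $d_1(x,y)>kR$.

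\textbf{Step 2 (uniform norm bound).} Because $Z$ is self-adjoint with $\spec(Z)\subset[-1,1]$, the spectral theorem gives
\begin{equation*}
\lVert F_k(Z)\rVert=\sup_{z\in\spec(Z)}|F_k(z)|\le 1 .
\end{equation*}
A matrix block is bounded by the full operator norm, $\lVert F_k(Z)(x,y)\rVert\le\lVert F_k(Z)\rVert$, obtained by compressing with the projections onto the internal spaces at $x$ and $y$. This also shows that the series for $\mathcal G$ converges in norm, since $t<1$, so its kernel is the termwise sum.

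\textbf{Step 3 (tail sum).} Set $d=d_1(x,y)$ and let $k_0=\lceil d/R\rceil$. By Step 1, only terms with $k\ge k_0$ contribute, so
\begin{equation*}
\lVert\mathcal G(x,y)\rVert\le K\sum_{k\ge k_0}t^k=\frac{K\,t^{k_0}}{1-t}.
\end{equation*}
Since $t<1$ and $k_0\ge d/R$, we have $t^{k_0}\le t^{d/R}=e^{-\mu d}$ with $\mu=-\frac1R\log t$, which is exactly \eqref{eq:lemma-bound}.

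The only mildly delicate point is Step 2, namely passing from the operator norm to the norm of the kernel blocks and exchanging the infinite sum with taking matrix elements. Both follow from norm convergence, and I would handle them first. Everything else is routine bookkeeping.
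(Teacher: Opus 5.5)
Your proof is correct and follows the paper's argument essentially step by step. Finite range gives $F_k(Z)(x,y)=0$ for $d_1(x,y)>kR$, functional calculus gives $\lVert F_k(Z)(x,y)\rVert\le 1$, and only the tail $k\ge\lceil d_1(x,y)/R\rceil$ survives, which you sum geometrically; your explicit remarks on block norms and on taking matrix elements of the norm-convergent series termwise fill in points the paper leaves implicit.
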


\begin{proof}
  Set
  \[
    \mathcal G_k(x,y):=\langle x | F_k(Z)|y\rangle.
  \]
  Since $Z$ is self-adjoint and $\spec(Z)\subset[-1,1]$, the continuous
  functional calculus gives
  \[
    \lVert F_k(Z)\rVert
    \leq \sup_{|z|\leq1}|F_k(z)|\leq1,
  \]
  and hence
  \begin{equation}
    \lVert \mathcal G_k(x,y)\rVert\leq1.
    \label{eq:Gk-norm}
  \end{equation}

  A product of $j$ operators of range $R$ has range at most $jR$.
  Consequently, because $F_k$ has degree at most $k$,
  \begin{equation}
    \mathcal G_k(x,y)=0
    \qquad\text{if}\qquad
    d_1(x,y)>kR.
    \label{eq:Gk-range}
  \end{equation}
  Introduce the ceil function $\lceil \cdot \rceil$ and define
  \begin{equation}
    N:=\left\lceil\frac{d_1(x,y)}{R}\right\rceil.
    \label{eq:N-ceiling}
  \end{equation}
  Equation~\eqref{eq:Gk-range} implies that all terms with $k<N$ vanish.
  Therefore, using \eqref{eq:Gk-norm},
  \begin{align}
    \lVert \mathcal G(x,y)\rVert
    &\leq K\sum_{k=N}^{\infty}t^k
      =\frac{K}{1-t}t^N \nonumber\\
    &\leq \frac{K}{1-t}t^{d_1(x,y)/R}
      =\frac{K}{1-t}
       \exp\!\left[\frac{\log t}{R}d_1(x,y)\right].
  \end{align}
  Here the second inequality uses $N\geq d_1(x,y)/R$ together with
  $0<t<1$. This is \eqref{eq:lemma-bound}.
\end{proof}

Using this lemma, we can prove Proposition~\ref{prop:local}.
Set
\begin{equation}
  c:=\frac{u+v}{2},
  \qquad
  t:=\frac{v-u}{v+u},
  \qquad
  Z:=\frac{1}{t}\left(I-\frac{H}{c}\right)
    =\frac{v+u-2H}{v-u}.
  \label{eq:ctZ}
\end{equation}
Then $c>0$, $0<t<1$, and the spectral bounds on $H$ imply
\[
  \spec(Z)\subset[-1,1].
\]
Since $H=c(I-tZ)$, the Neumann series converges in operator
norm and gives
\begin{equation}
  H^{-1}
  =\frac{1}{c}(I-tZ)^{-1}
  =\frac{1}{c}\sum_{k=0}^{\infty}t^kZ^k.
  \label{eq:neumann-expansion}
\end{equation}
Apply Lemma~\ref{lem:polynomial-locality} with
\[
  K=\frac{1}{c},
  \qquad
  F_k(z)=z^k.
\]
This yields
\[
  \lVert H^{-1}(x,y)\rVert
  \leq \frac{1}{c(1-t)}
  \exp\!\left[-\frac{-\log t}{R}d_1(x,y)\right].
\]
Finally,
\[
  \frac{1}{c(1-t)}=\frac{1}{u},
  \qquad
  -\frac{1}{R}\log t
  =\frac{1}{R}\log\!\left(\frac{v+u}{v-u}\right),
\]
which proves \eqref{eq:main-bound}.

In this work, we examine SMG in two different setups
with and without domain-walls in the mass term.
Without domain-walls, the gap in $D_Y^\dagger D_Y$ guarantees
by the above discussion
that all the Majorana fermions are massive.

It is, however, nontrivial in the domain-wall fermion cases,
where we observe that the spectrum of $D_Y^\dagger D_Y$
is decomposed into that of the massless free fermions on the physical wall
and the one of the gapped interacting fermions on the mirror wall.
In order to verify decoupling of the two fermions,
we need to give an infra-red cut-off to the physical fermions,
by imposing the anti-periodic boundary condition, for example,
and confirm that $S_F(x-z)$ between two walls exponentially decay
and check the finite size scaling at different lattice volumes.
It is also important to confirm that the correlators of
the scalar field $\pi^\alpha$ exponentially decay.

\section{Symmetry of $D_Y$ and degeneracy of the spectrum}\label{app:4-fold}

When $g_P^2=0$, every eigenvalue of the Dirac operator $D_Y$ has, at least,
eight multiplicities due to the $SO(8)$ flavor symmetry.

For nonzero Yukawa interaction with $g_P^2 \neq 0$,
$D_Y$ has the charge-conjugation symmetry:
\begin{equation}
  CD_YC^{-1} = D_Y^T,
\end{equation}
where the superscript $T$ denotes the transpose.
For any right eigenfunction $\phi$ with $D_Y \phi=\lambda \phi$ where $\lambda$ is the complex eigenvalue,
$\phi^T C$ is a left eigenfunction satisfying $\phi^T C D = \lambda \phi^T C $.
Since $\phi^T C \phi=0$, there must be another right eigenfunction $\phi'$ with the same eigenvalue $\lambda$
which has nonzero $\phi^T C \phi'$.
Therefore, any eigenvalue $\lambda$ of $D_Y$ has, at least, two-fold degeneracy.

When $g_P^2\le 0$, the $D_Y$ has the $\gamma_3$ Hermiticity:
\begin{equation}
  \gamma_3 D_Y\gamma_3 = D_Y^\dagger.
\end{equation}
In this case, for any right eigenfunction $\phi$ with $D_Y \phi=\lambda \phi$ where $\lambda$ is the complex eigenvalue,
$\phi^\dagger \gamma_5$ is a left eigenfunction satisfying $\phi^\dagger \gamma_5 D_Y = \lambda^* \phi^\dagger \gamma_5$.
Therefore, every complex eigenvalue makes a pair with its conjugate,
which makes the Pfaffian of $D_Y$ always real.

For $g_P^2>0$ the $\gamma_3$ Hermiticity is broken but we still numerically observe
a good pairing of $\lambda$ and $\lambda^*$ in our simulated ensembles.
The difference is within 1\% but it is finite, which adds up to
small but nonzero value of the phase of the fermion Pfaffian.
It would be interesting to study if this small phase is
related to absence of the 'tHooft anomaly.

In the spectrum of $D_Y^\dagger D_Y$, we observe (almost) four multiplicities
in the mirror wall fermion sector, while it is eight-fold degeneracies in the physical wall fermions.

\section{Summary of simulation parameters and numerical data}

Here we summarize the simulation parameters and numerical results.
In particular, we show that the complex Pfaffian phase is kept small in all our simulated ensembles.

In Tab.~\ref{tab:NoDW}, our simulations with uniform masses without domain-walls are summarized.
We also present the results for the Pfaffian phase $\theta={\rm arg}({\rm Pf}[CD_Y])$, its standard deviation
$\sqrt{\langle \Delta\theta^2 \rangle}$ and the lowest eigenvalue of $D_Y^\dagger D_Y$.
For $g_P^2=-0.05$, we do not present the results, since $\theta$ is trivially zero due to the $\gamma^3$ Hermiticity and
the gap of $(D_Y^\dagger D_Y)$ is closed at some value of $m$, due to the presence of real eigenvalues of $D_Y$.

Table~\ref{tab:DW} presents the simulation set up with domain-walls on our main lattice $L_x/a \times L_y/a= 12\times 24$.
We present the lowest eigenvalue of $(D_Y^\dagger D_Y)$ in the mirror sector $\langle \lambda_0^{\rm mirr}\rangle$ as well as
  the expectation value  $\langle \gamma^1 \rangle_{\lambda_0^{\rm mirr}} $ of the lowest mode,
  and those $\langle \lambda_0^{\rm phys}\rangle$ and $\langle \gamma^1 \rangle_{\lambda_0^{\rm phys}} $ in the physical sector.
  We observe that the mirror gap $\langle \lambda_0^{\rm mirr}\rangle$ becomes nonzero,
  while the physical fermion is kept massless.
  Interestingly, the lowest eigenvalue $\sim 2.0\times 10^{-8}$ at $g_P^2=0.15$
is closer to zero than that $\sim 1.4\times 10^{-4}$ at $g_P^2=0.0$.
  It is also interesting to see that $\langle \lambda_0^{\rm mirr}\rangle$ does not monotonically increases with $g_P^2$
  but reaches a maximum around $g_P^2=0.15$--0.2.

In Tab.~\ref{tab:DW-Vol} we present $L_t/a$ dependence of the numerical results at a fixed value of $g_P^2=0.15$.
The first excited eigenvalue  in the physical sector $\langle \lambda_1^{\rm phys}\rangle  a^2$  scales
as $1/L_t^2$, while the lowest eigenvalue in the mirror sector is consistent with a constant plus $O(a/L_t)$ corrections.

\begin{table}[tbh]
\centering
\begin{tabular}{c c c c c | c c | c}
  \hline\hline
  $g_P^2$ & $L_x/a$ & $L_t/a$ & $m$ & $N_{\rm trj}$
  & $\langle \theta \rangle$[rad] & $\sqrt{\langle \Delta\theta^2 \rangle}$[rad] & $\langle \lambda_0\rangle a^2$\\
  \hline
  0.2 & 12 & 24 & 0.5 & 1000 & $-5(7)\times 10^{-7}$ & $7.1(5)\times 10^{-6}$ & 0.770(2)\\
  &    &    & 0.0 & 1000 &  $0.5(1.9)\times 10^{-5}$ & $1.4(2)\times 10^{-4}$ & 0.256(4)\\
  &    &    & -1.5 & 1000 & $2(5)\times 10^{-4}$ & $4.4(2)\times 10^{-3}$ & 0.083(3)\\
  &    &    & -2.0 & 1000 & $0$ & $0$ & 0.099(3)\\
  \hline
  -0.05 & 12 & 24 & 0.0 & 1000 & -- & -- & -- \\
  &    &    & -1.5 & 1000 & -- & -- & --\\
  &    &    & -2.0 & 1000 & -- & -- & --\\
  \hline
\end{tabular}  
\caption{Simulation parameters without domain-walls and results for the Pfaffian phase $\langle \theta \rangle$,
  $\sqrt{\langle \Delta\theta^2 \rangle}$, and the minimal value of the eigenvalue $\langle \lambda_0\rangle$ of $(D_Y^\dagger D_Y)$.
  For $g_P^2=-0.05$, we do not present the results, since $\theta$ is trivially zero due to the $\gamma^3$ Hermiticity and
  the gap of $(D_Y^\dagger D_Y)$ is closed due to the presence of real eigenvalues of $D_Y$.
}\label{tab:NoDW}
\end{table}

\begin{table}[tbh]
\centering
\begin{tabular}{c c | c c | c c c c}
  \hline\hline
  $g_P^2$ & $N_{\rm trj}$
  & $\langle \theta \rangle$[rad] & $\sqrt{\langle \Delta\theta^2 \rangle}$[rad]
  & $\langle \lambda_0^{\rm mirr}\rangle  a^2$ & $\langle \gamma^1 \rangle_{\lambda_0^{\rm mirr}} $
  & $\langle \lambda_0^{\rm phys}\rangle  a^2$ & $\langle \gamma^1 \rangle_{\lambda_0^{\rm phys}} $\\
  \hline
  0.02 & 1000 & $-2.6(1.9)\times 10^{-5}$ & $2.2(2)\times 10^{-4}$ & 0.0031(4) & 0.8798(4) & 1.64(9)$\times 10^{-6}$ & -0.96852208(9)\\
  0.05 & 3000 & $-0.00088(59)$ & $0.0074(7)$ & 0.0119(3) & 0.8875(3) & 1.56(4)$\times 10^{-7}$ & -0.993769983(4)\\
  0.10 & 3000 & 0.0013(12) & 0.015(3) & 0.0448(1) & 0.671(1) & 2.61(9)$\times 10^{-8}$ & -0.9987171237(9)\\
  0.15 & 3000 & -0.0003(1) & 0.0026(7) & 0.073(2) & 0.371(2) & 2.48(8)$\times 10^{-8}$ &  -0.9994428155(8)\\
  0.20 & 3000 & -1.6(6.5)$\times 10^{-5}$ & 0.0010(1) & 0.075(2) & 0.297(2) & 2.76(8)$\times 10^{-8}$ & -0.9996182654(8)\\
  0.25 & 3000 & -1.7(5.1)$\times 10^{-5}$ & 0.00077(6) & 0.062(2) & 0.352(2) & 2.92(9)$\times 10^{-8}$ & -0.9996659203(9)\\
   \hline
\end{tabular}  
\caption{Simulation parameters with two domain-walls on a fixed lattice $12\times 24$
  and some numerical results. Here the Pfaffian phase $\langle \theta \rangle$,
  $\sqrt{\langle \Delta\theta^2 \rangle}$, the lowest eigenvalue of $(D_Y^\dagger D_Y)$ in the mirror sector $\langle \lambda_0^{\rm mirr}\rangle$ as well as
  the expectation value  $\langle \gamma^1 \rangle_{\lambda_0^{\rm mirr}} $ of the lowest mode,
  and those $\langle \lambda_0^{\rm phys}\rangle$, $\langle \gamma^1 \rangle_{\lambda_0^{\rm phys}} $ in the physical sector
are presented.
}\label{tab:DW}
\end{table}

\begin{table}[tbh]
\centering
\begin{tabular}{c c c c | c c | c c c c}
  \hline\hline
  $g_P^2$ & $L_x/a$ & $L_t/a$ & $N_{\rm trj}$
  & $\langle \theta \rangle$[rad] & $\sqrt{\langle \Delta\theta^2 \rangle}$[rad]
  & $\langle \lambda_0^{\rm mirr}\rangle  a^2$ & $\langle \gamma^1 \rangle_{\lambda_0^{\rm mirr}} $
  & $\langle \lambda_1^{\rm phys}\rangle  a^2$ & $\langle \gamma^1 \rangle_{\lambda_1^{\rm phys}} $\\
  \hline
  0.15 & 12 & 8 & 1000 & -0.00017(13) & 0.0012(2) & 0.111(5) & 0.426(5) & 0.49945(8) & -0.88896(8)\\
  &    & 12 & 1700 & -0.0002(2) & 0.0015(3) & 0.089(2) & 0.434(2) & 0.24977(1) & -0.98010(1)\\
  &    & 16 & 1000 & -0.0001(3) & 0.0022(5) & 0.087(3) & 0.406(3) & 0.14614(4) & -0.93882(4)\\
  &    & 24 & 3000 & -0.0003(1) & 0.0026(7) & 0.073(2) & 0.371(2) & 0.06678(1) & -0.94942(1)\\
  &    & 32 & 3000 &  0.0001(2) & 0.0022(2) & 0.067(1) & 0.389(1) & 0.03795(1) & -0.98634(1)\\
   \hline
\end{tabular}  
\caption{Simulation parameters with two domain-walls on different volume lattices 
  and some numerical results at a fixed $g_P^2=0.15$.
  The Pfaffian phase $\langle \theta \rangle$,
  $\sqrt{\langle \Delta\theta^2 \rangle}$, the lowest eigenvalue of $(D_Y^\dagger D_Y)$ in the mirror sector $\langle \lambda_0^{\rm mirr}\rangle$ as well as
  the expectation value  $\langle \gamma^1 \rangle_{\lambda_0^{\rm mirr}} $ of the lowest mode,
  and those for the first excited state, $\langle \lambda_1^{\rm phys}\rangle$ and $\langle \gamma^1 \rangle_{\lambda_1^{\rm phys}} $ in the physical sector
  are presented.
}\label{tab:DW-Vol}
\end{table}

\clearpage
\bibliography{ref.bib}

\begin{thebibliography}{49}%
\makeatletter
\providecommand \@ifxundefined [1]{%
 \@ifx{#1\undefined}
}%
\providecommand \@ifnum [1]{%
 \ifnum #1\expandafter \@firstoftwo
 \else \expandafter \@secondoftwo
 \fi
}%
\providecommand \@ifx [1]{%
 \ifx #1\expandafter \@firstoftwo
 \else \expandafter \@secondoftwo
 \fi
}%
\providecommand \natexlab [1]{#1}%
\providecommand \enquote  [1]{``#1''}%
\providecommand \bibnamefont  [1]{#1}%
\providecommand \bibfnamefont [1]{#1}%
\providecommand \citenamefont [1]{#1}%
\providecommand \href@noop [0]{\@secondoftwo}%
\providecommand \href [0]{\begingroup \@sanitize@url \@href}%
\providecommand \@href[1]{\@@startlink{#1}\@@href}%
\providecommand \@@href[1]{\endgroup#1\@@endlink}%
\providecommand \@sanitize@url [0]{\catcode `\\12\catcode `\$12\catcode `\&12\catcode `\#12\catcode `\^12\catcode `\_12\catcode `\%12\relax}%
\providecommand \@@startlink[1]{}%
\providecommand \@@endlink[0]{}%
\providecommand \url  [0]{\begingroup\@sanitize@url \@url }%
\providecommand \@url [1]{\endgroup\@href {#1}{\urlprefix }}%
\providecommand \urlprefix  [0]{URL }%
\providecommand \Eprint [0]{\href }%
\providecommand \doibase [0]{https://doi.org/}%
\providecommand \selectlanguage [0]{\@gobble}%
\providecommand \bibinfo  [0]{\@secondoftwo}%
\providecommand \bibfield  [0]{\@secondoftwo}%
\providecommand \translation [1]{[#1]}%
\providecommand \BibitemOpen [0]{}%
\providecommand \bibitemStop [0]{}%
\providecommand \bibitemNoStop [0]{.\EOS\space}%
\providecommand \EOS [0]{\spacefactor3000\relax}%
\providecommand \BibitemShut  [1]{\csname bibitem#1\endcsname}%
\let\auto@bib@innerbib\@empty
\bibitem [{\citenamefont {Kaplan}(1992)}]{Kaplan:1992bt}%
  \BibitemOpen
  \bibfield  {author} {\bibinfo {author} {\bibfnamefont {D.~B.}\ \bibnamefont {Kaplan}},\ }\bibfield  {title} {\bibinfo {title} {{A Method for simulating chiral fermions on the lattice}},\ }\href {https://doi.org/10.1016/0370-2693(92)91112-M} {\bibfield  {journal} {\bibinfo  {journal} {Phys. Lett. B}\ }\textbf {\bibinfo {volume} {288}},\ \bibinfo {pages} {342} (\bibinfo {year} {1992})},\ \Eprint {https://arxiv.org/abs/hep-lat/9206013} {arXiv:hep-lat/9206013} \BibitemShut {NoStop}%
\bibitem [{\citenamefont {Shamir}(1993)}]{Shamir:1993zy}%
  \BibitemOpen
  \bibfield  {author} {\bibinfo {author} {\bibfnamefont {Y.}~\bibnamefont {Shamir}},\ }\bibfield  {title} {\bibinfo {title} {{Chiral fermions from lattice boundaries}},\ }\href {https://doi.org/10.1016/0550-3213(93)90162-I} {\bibfield  {journal} {\bibinfo  {journal} {Nucl. Phys. B}\ }\textbf {\bibinfo {volume} {406}},\ \bibinfo {pages} {90} (\bibinfo {year} {1993})},\ \Eprint {https://arxiv.org/abs/hep-lat/9303005} {arXiv:hep-lat/9303005} \BibitemShut {NoStop}%
\bibitem [{\citenamefont {Neuberger}(1998)}]{Neuberger:1997fp}%
  \BibitemOpen
  \bibfield  {author} {\bibinfo {author} {\bibfnamefont {H.}~\bibnamefont {Neuberger}},\ }\bibfield  {title} {\bibinfo {title} {{Exactly massless quarks on the lattice}},\ }\href {https://doi.org/10.1016/S0370-2693(97)01368-3} {\bibfield  {journal} {\bibinfo  {journal} {Phys. Lett. B}\ }\textbf {\bibinfo {volume} {417}},\ \bibinfo {pages} {141} (\bibinfo {year} {1998})},\ \Eprint {https://arxiv.org/abs/hep-lat/9707022} {arXiv:hep-lat/9707022} \BibitemShut {NoStop}%
\bibitem [{\citenamefont {Ginsparg}\ and\ \citenamefont {Wilson}(1982)}]{Ginsparg:1981bj}%
  \BibitemOpen
  \bibfield  {author} {\bibinfo {author} {\bibfnamefont {P.~H.}\ \bibnamefont {Ginsparg}}\ and\ \bibinfo {author} {\bibfnamefont {K.~G.}\ \bibnamefont {Wilson}},\ }\bibfield  {title} {\bibinfo {title} {{A Remnant of Chiral Symmetry on the Lattice}},\ }\href {https://doi.org/10.1103/PhysRevD.25.2649} {\bibfield  {journal} {\bibinfo  {journal} {Phys. Rev. D}\ }\textbf {\bibinfo {volume} {25}},\ \bibinfo {pages} {2649} (\bibinfo {year} {1982})}\BibitemShut {NoStop}%
\bibitem [{\citenamefont {Luscher}(1998)}]{Luscher:1998pqa}%
  \BibitemOpen
  \bibfield  {author} {\bibinfo {author} {\bibfnamefont {M.}~\bibnamefont {Luscher}},\ }\bibfield  {title} {\bibinfo {title} {{Exact chiral symmetry on the lattice and the Ginsparg-Wilson relation}},\ }\href {https://doi.org/10.1016/S0370-2693(98)00423-7} {\bibfield  {journal} {\bibinfo  {journal} {Phys. Lett. B}\ }\textbf {\bibinfo {volume} {428}},\ \bibinfo {pages} {342} (\bibinfo {year} {1998})},\ \Eprint {https://arxiv.org/abs/hep-lat/9802011} {arXiv:hep-lat/9802011} \BibitemShut {NoStop}%
\bibitem [{\citenamefont {Luscher}(1999)}]{Luscher:1998du}%
  \BibitemOpen
  \bibfield  {author} {\bibinfo {author} {\bibfnamefont {M.}~\bibnamefont {Luscher}},\ }\bibfield  {title} {\bibinfo {title} {{Abelian chiral gauge theories on the lattice with exact gauge invariance}},\ }\href {https://doi.org/10.1016/S0550-3213(99)00115-7} {\bibfield  {journal} {\bibinfo  {journal} {Nucl. Phys. B}\ }\textbf {\bibinfo {volume} {549}},\ \bibinfo {pages} {295} (\bibinfo {year} {1999})},\ \Eprint {https://arxiv.org/abs/hep-lat/9811032} {arXiv:hep-lat/9811032} \BibitemShut {NoStop}%
\bibitem [{\citenamefont {Kikukawa}\ and\ \citenamefont {Nakayama}(2001)}]{Kikukawa:2000kd}%
  \BibitemOpen
  \bibfield  {author} {\bibinfo {author} {\bibfnamefont {Y.}~\bibnamefont {Kikukawa}}\ and\ \bibinfo {author} {\bibfnamefont {Y.}~\bibnamefont {Nakayama}},\ }\bibfield  {title} {\bibinfo {title} {{Gauge anomaly cancellations in SU(2)(L) x U(1)(Y) electroweak theory on the lattice}},\ }\href {https://doi.org/10.1016/S0550-3213(00)00714-8} {\bibfield  {journal} {\bibinfo  {journal} {Nucl. Phys. B}\ }\textbf {\bibinfo {volume} {597}},\ \bibinfo {pages} {519} (\bibinfo {year} {2001})},\ \Eprint {https://arxiv.org/abs/hep-lat/0005015} {arXiv:hep-lat/0005015} \BibitemShut {NoStop}%
\bibitem [{\citenamefont {Aoki}\ and\ \citenamefont {Fukaya}(2022)}]{Aoki:2022cwg}%
  \BibitemOpen
  \bibfield  {author} {\bibinfo {author} {\bibfnamefont {S.}~\bibnamefont {Aoki}}\ and\ \bibinfo {author} {\bibfnamefont {H.}~\bibnamefont {Fukaya}},\ }\bibfield  {title} {\bibinfo {title} {{Curved domain-wall fermions}},\ }\href {https://doi.org/10.1093/ptep/ptac075} {\bibfield  {journal} {\bibinfo  {journal} {PTEP}\ }\textbf {\bibinfo {volume} {2022}},\ \bibinfo {pages} {063B04} (\bibinfo {year} {2022})},\ \Eprint {https://arxiv.org/abs/2203.03782} {arXiv:2203.03782 [hep-lat]} \BibitemShut {NoStop}%
\bibitem [{\citenamefont {Aoki}\ and\ \citenamefont {Fukaya}(2023)}]{Aoki:2022aez}%
  \BibitemOpen
  \bibfield  {author} {\bibinfo {author} {\bibfnamefont {S.}~\bibnamefont {Aoki}}\ and\ \bibinfo {author} {\bibfnamefont {H.}~\bibnamefont {Fukaya}},\ }\bibfield  {title} {\bibinfo {title} {{Curved domain-wall fermion and its anomaly inflow}},\ }\href {https://doi.org/10.1093/ptep/ptad023} {\bibfield  {journal} {\bibinfo  {journal} {PTEP}\ }\textbf {\bibinfo {volume} {2023}},\ \bibinfo {pages} {033B05} (\bibinfo {year} {2023})},\ \Eprint {https://arxiv.org/abs/2212.11583} {arXiv:2212.11583 [hep-lat]} \BibitemShut {NoStop}%
\bibitem [{\citenamefont {Kaplan}(2024)}]{Kaplan:2023pxd}%
  \BibitemOpen
  \bibfield  {author} {\bibinfo {author} {\bibfnamefont {D.~B.}\ \bibnamefont {Kaplan}},\ }\bibfield  {title} {\bibinfo {title} {{Chiral Gauge Theory at the Boundary between Topological Phases}},\ }\href {https://doi.org/10.1103/PhysRevLett.132.141603} {\bibfield  {journal} {\bibinfo  {journal} {Phys. Rev. Lett.}\ }\textbf {\bibinfo {volume} {132}},\ \bibinfo {pages} {141603} (\bibinfo {year} {2024})},\ \Eprint {https://arxiv.org/abs/2312.01494} {arXiv:2312.01494 [hep-lat]} \BibitemShut {NoStop}%
\bibitem [{\citenamefont {Kaplan}\ and\ \citenamefont {Sen}(2024)}]{Kaplan:2023pvd}%
  \BibitemOpen
  \bibfield  {author} {\bibinfo {author} {\bibfnamefont {D.~B.}\ \bibnamefont {Kaplan}}\ and\ \bibinfo {author} {\bibfnamefont {S.}~\bibnamefont {Sen}},\ }\bibfield  {title} {\bibinfo {title} {{Weyl Fermions on a Finite Lattice}},\ }\href {https://doi.org/10.1103/PhysRevLett.132.141604} {\bibfield  {journal} {\bibinfo  {journal} {Phys. Rev. Lett.}\ }\textbf {\bibinfo {volume} {132}},\ \bibinfo {pages} {141604} (\bibinfo {year} {2024})},\ \Eprint {https://arxiv.org/abs/2312.04012} {arXiv:2312.04012 [hep-lat]} \BibitemShut {NoStop}%
\bibitem [{\citenamefont {Aoki}\ \emph {et~al.}(2024)\citenamefont {Aoki}, \citenamefont {Fukaya},\ and\ \citenamefont {Kan}}]{Aoki:2024bwx}%
  \BibitemOpen
  \bibfield  {author} {\bibinfo {author} {\bibfnamefont {S.}~\bibnamefont {Aoki}}, \bibinfo {author} {\bibfnamefont {H.}~\bibnamefont {Fukaya}},\ and\ \bibinfo {author} {\bibfnamefont {N.}~\bibnamefont {Kan}},\ }\bibfield  {title} {\bibinfo {title} {{A Lattice Formulation of Weyl Fermions on a Single Curved Surface}},\ }\href {https://doi.org/10.1093/ptep/ptae041} {\bibfield  {journal} {\bibinfo  {journal} {PTEP}\ }\textbf {\bibinfo {volume} {2024}},\ \bibinfo {pages} {043B05} (\bibinfo {year} {2024})},\ \Eprint {https://arxiv.org/abs/2402.09774} {arXiv:2402.09774 [hep-lat]} \BibitemShut {NoStop}%
\bibitem [{\citenamefont {Clancy}\ and\ \citenamefont {Kaplan}(2025)}]{Clancy:2024bjb}%
  \BibitemOpen
  \bibfield  {author} {\bibinfo {author} {\bibfnamefont {M.}~\bibnamefont {Clancy}}\ and\ \bibinfo {author} {\bibfnamefont {D.~B.}\ \bibnamefont {Kaplan}},\ }\bibfield  {title} {\bibinfo {title} {{Chiral edge states on spheres for lattice domain wall fermions}},\ }\href {https://doi.org/10.1103/PhysRevD.111.L031503} {\bibfield  {journal} {\bibinfo  {journal} {Phys. Rev. D}\ }\textbf {\bibinfo {volume} {111}},\ \bibinfo {pages} {L031503} (\bibinfo {year} {2025})},\ \Eprint {https://arxiv.org/abs/2410.23065} {arXiv:2410.23065 [hep-lat]} \BibitemShut {NoStop}%
\bibitem [{\citenamefont {Eichten}\ and\ \citenamefont {Preskill}(1986)}]{Eichten:1985ft}%
  \BibitemOpen
  \bibfield  {author} {\bibinfo {author} {\bibfnamefont {E.}~\bibnamefont {Eichten}}\ and\ \bibinfo {author} {\bibfnamefont {J.}~\bibnamefont {Preskill}},\ }\bibfield  {title} {\bibinfo {title} {{Chiral Gauge Theories on the Lattice}},\ }\href {https://doi.org/10.1016/0550-3213(86)90207-5} {\bibfield  {journal} {\bibinfo  {journal} {Nucl. Phys. B}\ }\textbf {\bibinfo {volume} {268}},\ \bibinfo {pages} {179} (\bibinfo {year} {1986})}\BibitemShut {NoStop}%
\bibitem [{\citenamefont {Creutz}\ \emph {et~al.}(1997)\citenamefont {Creutz}, \citenamefont {Tytgat}, \citenamefont {Rebbi},\ and\ \citenamefont {Xue}}]{Creutz:1996xc}%
  \BibitemOpen
  \bibfield  {author} {\bibinfo {author} {\bibfnamefont {M.}~\bibnamefont {Creutz}}, \bibinfo {author} {\bibfnamefont {M.}~\bibnamefont {Tytgat}}, \bibinfo {author} {\bibfnamefont {C.}~\bibnamefont {Rebbi}},\ and\ \bibinfo {author} {\bibfnamefont {S.-S.}\ \bibnamefont {Xue}},\ }\bibfield  {title} {\bibinfo {title} {{Lattice formulation of the standard model}},\ }\href {https://doi.org/10.1016/S0370-2693(97)00463-2} {\bibfield  {journal} {\bibinfo  {journal} {Phys. Lett. B}\ }\textbf {\bibinfo {volume} {402}},\ \bibinfo {pages} {341} (\bibinfo {year} {1997})},\ \Eprint {https://arxiv.org/abs/hep-lat/9612017} {arXiv:hep-lat/9612017} \BibitemShut {NoStop}%
\bibitem [{\citenamefont {Fidkowski}\ and\ \citenamefont {Kitaev}(2010)}]{Fidkowski:2009dba}%
  \BibitemOpen
  \bibfield  {author} {\bibinfo {author} {\bibfnamefont {L.}~\bibnamefont {Fidkowski}}\ and\ \bibinfo {author} {\bibfnamefont {A.}~\bibnamefont {Kitaev}},\ }\bibfield  {title} {\bibinfo {title} {{The effects of interactions on the topological classification of free fermion systems}},\ }\href {https://doi.org/10.1103/PhysRevB.81.134509} {\bibfield  {journal} {\bibinfo  {journal} {Phys. Rev. B}\ }\textbf {\bibinfo {volume} {81}},\ \bibinfo {pages} {134509} (\bibinfo {year} {2010})},\ \Eprint {https://arxiv.org/abs/0904.2197} {arXiv:0904.2197 [cond-mat.str-el]} \BibitemShut {NoStop}%
\bibitem [{\citenamefont {Fidkowski}\ and\ \citenamefont {Kitaev}(2011)}]{Fidkowski:2010jmn}%
  \BibitemOpen
  \bibfield  {author} {\bibinfo {author} {\bibfnamefont {L.}~\bibnamefont {Fidkowski}}\ and\ \bibinfo {author} {\bibfnamefont {A.}~\bibnamefont {Kitaev}},\ }\bibfield  {title} {\bibinfo {title} {{Topological phases of fermions in one dimension}},\ }\href {https://doi.org/10.1103/PhysRevB.83.075103} {\bibfield  {journal} {\bibinfo  {journal} {Phys. Rev. B}\ }\textbf {\bibinfo {volume} {83}},\ \bibinfo {pages} {075103} (\bibinfo {year} {2011})},\ \Eprint {https://arxiv.org/abs/1008.4138} {arXiv:1008.4138 [cond-mat.str-el]} \BibitemShut {NoStop}%
\bibitem [{\citenamefont {Wen}(2013)}]{Wen:2013ppa}%
  \BibitemOpen
  \bibfield  {author} {\bibinfo {author} {\bibfnamefont {X.-G.}\ \bibnamefont {Wen}},\ }\bibfield  {title} {\bibinfo {title} {{A lattice non-perturbative definition of an SO(10) chiral gauge theory and its induced standard model}},\ }\href {https://doi.org/10.1088/0256-307X/30/11/111101} {\bibfield  {journal} {\bibinfo  {journal} {Chin. Phys. Lett.}\ }\textbf {\bibinfo {volume} {30}},\ \bibinfo {pages} {111101} (\bibinfo {year} {2013})},\ \Eprint {https://arxiv.org/abs/1305.1045} {arXiv:1305.1045 [hep-lat]} \BibitemShut {NoStop}%
\bibitem [{\citenamefont {Wang}\ and\ \citenamefont {Wen}(2023)}]{Wang:2013yta}%
  \BibitemOpen
  \bibfield  {author} {\bibinfo {author} {\bibfnamefont {J.}~\bibnamefont {Wang}}\ and\ \bibinfo {author} {\bibfnamefont {X.-G.}\ \bibnamefont {Wen}},\ }\bibfield  {title} {\bibinfo {title} {{Nonperturbative regularization of (1+1)-dimensional anomaly-free chiral fermions and bosons: On the equivalence of anomaly matching conditions and boundary gapping rules}},\ }\href {https://doi.org/10.1103/PhysRevB.107.014311} {\bibfield  {journal} {\bibinfo  {journal} {Phys. Rev. B}\ }\textbf {\bibinfo {volume} {107}},\ \bibinfo {pages} {014311} (\bibinfo {year} {2023})},\ \Eprint {https://arxiv.org/abs/1307.7480} {arXiv:1307.7480 [hep-lat]} \BibitemShut {NoStop}%
\bibitem [{\citenamefont {You}\ \emph {et~al.}(2014)\citenamefont {You}, \citenamefont {BenTov},\ and\ \citenamefont {Xu}}]{You:2014oaa}%
  \BibitemOpen
  \bibfield  {author} {\bibinfo {author} {\bibfnamefont {Y.}~\bibnamefont {You}}, \bibinfo {author} {\bibfnamefont {Y.}~\bibnamefont {BenTov}},\ and\ \bibinfo {author} {\bibfnamefont {C.}~\bibnamefont {Xu}},\ }\bibfield  {title} {\bibinfo {title} {{Interacting Topological Superconductors and possible Origin of $16n$ Chiral Fermions in the Standard Model}},\ }\href@noop {} {\  (\bibinfo {year} {2014})},\ \Eprint {https://arxiv.org/abs/1402.4151} {arXiv:1402.4151 [cond-mat.str-el]} \BibitemShut {NoStop}%
\bibitem [{\citenamefont {Ayyar}\ and\ \citenamefont {Chandrasekharan}(2015)}]{Ayyar:2014eua}%
  \BibitemOpen
  \bibfield  {author} {\bibinfo {author} {\bibfnamefont {V.}~\bibnamefont {Ayyar}}\ and\ \bibinfo {author} {\bibfnamefont {S.}~\bibnamefont {Chandrasekharan}},\ }\bibfield  {title} {\bibinfo {title} {{Massive fermions without fermion bilinear condensates}},\ }\href {https://doi.org/10.1103/PhysRevD.91.065035} {\bibfield  {journal} {\bibinfo  {journal} {Phys. Rev. D}\ }\textbf {\bibinfo {volume} {91}},\ \bibinfo {pages} {065035} (\bibinfo {year} {2015})},\ \Eprint {https://arxiv.org/abs/1410.6474} {arXiv:1410.6474 [hep-lat]} \BibitemShut {NoStop}%
\bibitem [{\citenamefont {BenTov}(2015)}]{BenTov:2014eea}%
  \BibitemOpen
  \bibfield  {author} {\bibinfo {author} {\bibfnamefont {Y.}~\bibnamefont {BenTov}},\ }\bibfield  {title} {\bibinfo {title} {{Fermion masses without symmetry breaking in two spacetime dimensions}},\ }\href {https://doi.org/10.1007/JHEP07(2015)034} {\bibfield  {journal} {\bibinfo  {journal} {JHEP}\ }\textbf {\bibinfo {volume} {07}},\ \bibinfo {pages} {034}},\ \Eprint {https://arxiv.org/abs/1412.0154} {arXiv:1412.0154 [cond-mat.str-el]} \BibitemShut {NoStop}%
\bibitem [{\citenamefont {You}\ and\ \citenamefont {Xu}(2015)}]{You:2014vea}%
  \BibitemOpen
  \bibfield  {author} {\bibinfo {author} {\bibfnamefont {Y.-Z.}\ \bibnamefont {You}}\ and\ \bibinfo {author} {\bibfnamefont {C.}~\bibnamefont {Xu}},\ }\bibfield  {title} {\bibinfo {title} {{Interacting Topological Insulator and Emergent Grand Unified Theory}},\ }\href {https://doi.org/10.1103/PhysRevB.91.125147} {\bibfield  {journal} {\bibinfo  {journal} {Phys. Rev. B}\ }\textbf {\bibinfo {volume} {91}},\ \bibinfo {pages} {125147} (\bibinfo {year} {2015})},\ \Eprint {https://arxiv.org/abs/1412.4784} {arXiv:1412.4784 [cond-mat.str-el]} \BibitemShut {NoStop}%
\bibitem [{\citenamefont {Catterall}(2016)}]{Catterall:2015zua}%
  \BibitemOpen
  \bibfield  {author} {\bibinfo {author} {\bibfnamefont {S.}~\bibnamefont {Catterall}},\ }\bibfield  {title} {\bibinfo {title} {{Fermion mass without symmetry breaking}},\ }\href {https://doi.org/10.1007/JHEP01(2016)121} {\bibfield  {journal} {\bibinfo  {journal} {JHEP}\ }\textbf {\bibinfo {volume} {01}},\ \bibinfo {pages} {121}},\ \Eprint {https://arxiv.org/abs/1510.04153} {arXiv:1510.04153 [hep-lat]} \BibitemShut {NoStop}%
\bibitem [{\citenamefont {Ayyar}\ and\ \citenamefont {Chandrasekharan}(2016{\natexlab{a}})}]{Ayyar:2015lrd}%
  \BibitemOpen
  \bibfield  {author} {\bibinfo {author} {\bibfnamefont {V.}~\bibnamefont {Ayyar}}\ and\ \bibinfo {author} {\bibfnamefont {S.}~\bibnamefont {Chandrasekharan}},\ }\bibfield  {title} {\bibinfo {title} {{Origin of fermion masses without spontaneous symmetry breaking}},\ }\href {https://doi.org/10.1103/PhysRevD.93.081701} {\bibfield  {journal} {\bibinfo  {journal} {Phys. Rev. D}\ }\textbf {\bibinfo {volume} {93}},\ \bibinfo {pages} {081701} (\bibinfo {year} {2016}{\natexlab{a}})},\ \Eprint {https://arxiv.org/abs/1511.09071} {arXiv:1511.09071 [hep-lat]} \BibitemShut {NoStop}%
\bibitem [{\citenamefont {He}\ \emph {et~al.}(2016)\citenamefont {He}, \citenamefont {Wu}, \citenamefont {You}, \citenamefont {Xu}, \citenamefont {Meng},\ and\ \citenamefont {Lu}}]{He:2016sbs}%
  \BibitemOpen
  \bibfield  {author} {\bibinfo {author} {\bibfnamefont {Y.-Y.}\ \bibnamefont {He}}, \bibinfo {author} {\bibfnamefont {H.-Q.}\ \bibnamefont {Wu}}, \bibinfo {author} {\bibfnamefont {Y.-Z.}\ \bibnamefont {You}}, \bibinfo {author} {\bibfnamefont {C.}~\bibnamefont {Xu}}, \bibinfo {author} {\bibfnamefont {Z.~Y.}\ \bibnamefont {Meng}},\ and\ \bibinfo {author} {\bibfnamefont {Z.-Y.}\ \bibnamefont {Lu}},\ }\bibfield  {title} {\bibinfo {title} {{Quantum critical point of Dirac fermion mass generation without spontaneous symmetry breaking}},\ }\href {https://doi.org/10.1103/PhysRevB.94.241111} {\bibfield  {journal} {\bibinfo  {journal} {Phys. Rev. B}\ }\textbf {\bibinfo {volume} {94}},\ \bibinfo {pages} {241111} (\bibinfo {year} {2016})},\ \Eprint {https://arxiv.org/abs/1603.08376} {arXiv:1603.08376 [cond-mat.str-el]} \BibitemShut {NoStop}%
\bibitem [{\citenamefont {Ayyar}\ and\ \citenamefont {Chandrasekharan}(2016{\natexlab{b}})}]{Ayyar:2016lxq}%
  \BibitemOpen
  \bibfield  {author} {\bibinfo {author} {\bibfnamefont {V.}~\bibnamefont {Ayyar}}\ and\ \bibinfo {author} {\bibfnamefont {S.}~\bibnamefont {Chandrasekharan}},\ }\bibfield  {title} {\bibinfo {title} {{Fermion masses through four-fermion condensates}},\ }\href {https://doi.org/10.1007/JHEP10(2016)058} {\bibfield  {journal} {\bibinfo  {journal} {JHEP}\ }\textbf {\bibinfo {volume} {10}},\ \bibinfo {pages} {058}},\ \Eprint {https://arxiv.org/abs/1606.06312} {arXiv:1606.06312 [hep-lat]} \BibitemShut {NoStop}%
\bibitem [{\citenamefont {You}\ \emph {et~al.}(2018)\citenamefont {You}, \citenamefont {He}, \citenamefont {Xu},\ and\ \citenamefont {Vishwanath}}]{You:2017ltx}%
  \BibitemOpen
  \bibfield  {author} {\bibinfo {author} {\bibfnamefont {Y.-Z.}\ \bibnamefont {You}}, \bibinfo {author} {\bibfnamefont {Y.-C.}\ \bibnamefont {He}}, \bibinfo {author} {\bibfnamefont {C.}~\bibnamefont {Xu}},\ and\ \bibinfo {author} {\bibfnamefont {A.}~\bibnamefont {Vishwanath}},\ }\bibfield  {title} {\bibinfo {title} {{Symmetric Fermion Mass Generation as Deconfined Quantum Criticality}},\ }\href {https://doi.org/10.1103/PhysRevX.8.011026} {\bibfield  {journal} {\bibinfo  {journal} {Phys. Rev. X}\ }\textbf {\bibinfo {volume} {8}},\ \bibinfo {pages} {011026} (\bibinfo {year} {2018})},\ \Eprint {https://arxiv.org/abs/1705.09313} {arXiv:1705.09313 [cond-mat.str-el]} \BibitemShut {NoStop}%
\bibitem [{\citenamefont {Kikukawa}(2019)}]{Kikukawa:2017gvk}%
  \BibitemOpen
  \bibfield  {author} {\bibinfo {author} {\bibfnamefont {Y.}~\bibnamefont {Kikukawa}},\ }\bibfield  {title} {\bibinfo {title} {{Why is the mission impossible? -- Decoupling the mirror Ginsparg-Wilson fermions in the lattice models for two-dimensional abelian chiral gauge theories}},\ }\href {https://doi.org/10.1093/ptep/ptz055} {\bibfield  {journal} {\bibinfo  {journal} {PTEP}\ }\textbf {\bibinfo {volume} {2019}},\ \bibinfo {pages} {073B02} (\bibinfo {year} {2019})},\ \Eprint {https://arxiv.org/abs/1710.11101} {arXiv:1710.11101 [hep-lat]} \BibitemShut {NoStop}%
\bibitem [{\citenamefont {Wang}\ and\ \citenamefont {Wen}(2020)}]{Wang:2018jkc}%
  \BibitemOpen
  \bibfield  {author} {\bibinfo {author} {\bibfnamefont {J.}~\bibnamefont {Wang}}\ and\ \bibinfo {author} {\bibfnamefont {X.-G.}\ \bibnamefont {Wen}},\ }\bibfield  {title} {\bibinfo {title} {{Nonperturbative definition of the standard models}},\ }\href {https://doi.org/10.1103/PhysRevResearch.2.023356} {\bibfield  {journal} {\bibinfo  {journal} {Phys. Rev. Res.}\ }\textbf {\bibinfo {volume} {2}},\ \bibinfo {pages} {023356} (\bibinfo {year} {2020})},\ \Eprint {https://arxiv.org/abs/1809.11171} {arXiv:1809.11171 [hep-th]} \BibitemShut {NoStop}%
\bibitem [{\citenamefont {Razamat}\ and\ \citenamefont {Tong}(2021)}]{Razamat:2020kyf}%
  \BibitemOpen
  \bibfield  {author} {\bibinfo {author} {\bibfnamefont {S.~S.}\ \bibnamefont {Razamat}}\ and\ \bibinfo {author} {\bibfnamefont {D.}~\bibnamefont {Tong}},\ }\bibfield  {title} {\bibinfo {title} {{Gapped Chiral Fermions}},\ }\href {https://doi.org/10.1103/PhysRevX.11.011063} {\bibfield  {journal} {\bibinfo  {journal} {Phys. Rev. X}\ }\textbf {\bibinfo {volume} {11}},\ \bibinfo {pages} {011063} (\bibinfo {year} {2021})},\ \Eprint {https://arxiv.org/abs/2009.05037} {arXiv:2009.05037 [hep-th]} \BibitemShut {NoStop}%
\bibitem [{\citenamefont {Tong}(2022)}]{Tong:2021phe}%
  \BibitemOpen
  \bibfield  {author} {\bibinfo {author} {\bibfnamefont {D.}~\bibnamefont {Tong}},\ }\bibfield  {title} {\bibinfo {title} {{Comments on symmetric mass generation in 2d and 4d}},\ }\href {https://doi.org/10.1007/JHEP07(2022)001} {\bibfield  {journal} {\bibinfo  {journal} {JHEP}\ }\textbf {\bibinfo {volume} {07}},\ \bibinfo {pages} {001}},\ \Eprint {https://arxiv.org/abs/2104.03997} {arXiv:2104.03997 [hep-th]} \BibitemShut {NoStop}%
\bibitem [{\citenamefont {Butt}\ \emph {et~al.}(2021)\citenamefont {Butt}, \citenamefont {Catterall},\ and\ \citenamefont {Toga}}]{Butt:2021koj}%
  \BibitemOpen
  \bibfield  {author} {\bibinfo {author} {\bibfnamefont {N.}~\bibnamefont {Butt}}, \bibinfo {author} {\bibfnamefont {S.}~\bibnamefont {Catterall}},\ and\ \bibinfo {author} {\bibfnamefont {G.~C.}\ \bibnamefont {Toga}},\ }\bibfield  {title} {\bibinfo {title} {{Symmetric Mass Generation in Lattice Gauge Theory}},\ }\href {https://doi.org/10.3390/sym13122276} {\bibfield  {journal} {\bibinfo  {journal} {Symmetry}\ }\textbf {\bibinfo {volume} {13}},\ \bibinfo {pages} {2276} (\bibinfo {year} {2021})},\ \Eprint {https://arxiv.org/abs/2111.01001} {arXiv:2111.01001 [hep-lat]} \BibitemShut {NoStop}%
\bibitem [{\citenamefont {Zeng}\ \emph {et~al.}(2022)\citenamefont {Zeng}, \citenamefont {Zhu}, \citenamefont {Wang},\ and\ \citenamefont {You}}]{Zeng:2022grc}%
  \BibitemOpen
  \bibfield  {author} {\bibinfo {author} {\bibfnamefont {M.}~\bibnamefont {Zeng}}, \bibinfo {author} {\bibfnamefont {Z.}~\bibnamefont {Zhu}}, \bibinfo {author} {\bibfnamefont {J.}~\bibnamefont {Wang}},\ and\ \bibinfo {author} {\bibfnamefont {Y.-Z.}\ \bibnamefont {You}},\ }\bibfield  {title} {\bibinfo {title} {{Symmetric Mass Generation in the 1+1 Dimensional Chiral Fermion 3-4-5-0 Model}},\ }\href {https://doi.org/10.1103/PhysRevLett.128.185301} {\bibfield  {journal} {\bibinfo  {journal} {Phys. Rev. Lett.}\ }\textbf {\bibinfo {volume} {128}},\ \bibinfo {pages} {185301} (\bibinfo {year} {2022})},\ \Eprint {https://arxiv.org/abs/2202.12355} {arXiv:2202.12355 [cond-mat.str-el]} \BibitemShut {NoStop}%
\bibitem [{\citenamefont {Wang}\ and\ \citenamefont {You}(2022)}]{Wang:2022ucy}%
  \BibitemOpen
  \bibfield  {author} {\bibinfo {author} {\bibfnamefont {J.}~\bibnamefont {Wang}}\ and\ \bibinfo {author} {\bibfnamefont {Y.-Z.}\ \bibnamefont {You}},\ }\bibfield  {title} {\bibinfo {title} {{Symmetric Mass Generation}},\ }\href {https://doi.org/10.3390/sym14071475} {\bibfield  {journal} {\bibinfo  {journal} {Symmetry}\ }\textbf {\bibinfo {volume} {14}},\ \bibinfo {pages} {1475} (\bibinfo {year} {2022})},\ \Eprint {https://arxiv.org/abs/2204.14271} {arXiv:2204.14271 [cond-mat.str-el]} \BibitemShut {NoStop}%
\bibitem [{\citenamefont {Liu}\ \emph {et~al.}(2024)\citenamefont {Liu} \emph {et~al.}}]{Liu:2023msa}%
  \BibitemOpen
  \bibfield  {author} {\bibinfo {author} {\bibfnamefont {Z.~H.}\ \bibnamefont {Liu}} \emph {et~al.},\ }\bibfield  {title} {\bibinfo {title} {{Disorder Operator and R{\'e}nyi Entanglement Entropy of Symmetric Mass Generation}},\ }\href {https://doi.org/10.1103/PhysRevLett.132.156503} {\bibfield  {journal} {\bibinfo  {journal} {Phys. Rev. Lett.}\ }\textbf {\bibinfo {volume} {132}},\ \bibinfo {pages} {156503} (\bibinfo {year} {2024})},\ \Eprint {https://arxiv.org/abs/2308.07380} {arXiv:2308.07380 [cond-mat.str-el]} \BibitemShut {NoStop}%
\bibitem [{\citenamefont {Butt}\ \emph {et~al.}(2025)\citenamefont {Butt}, \citenamefont {Catterall},\ and\ \citenamefont {Hasenfratz}}]{Butt:2024kxi}%
  \BibitemOpen
  \bibfield  {author} {\bibinfo {author} {\bibfnamefont {N.}~\bibnamefont {Butt}}, \bibinfo {author} {\bibfnamefont {S.}~\bibnamefont {Catterall}},\ and\ \bibinfo {author} {\bibfnamefont {A.}~\bibnamefont {Hasenfratz}},\ }\bibfield  {title} {\bibinfo {title} {{Symmetric Mass Generation with Four SU(2) Doublet Fermions}},\ }\href {https://doi.org/10.1103/PhysRevLett.134.031602} {\bibfield  {journal} {\bibinfo  {journal} {Phys. Rev. Lett.}\ }\textbf {\bibinfo {volume} {134}},\ \bibinfo {pages} {031602} (\bibinfo {year} {2025})},\ \Eprint {https://arxiv.org/abs/2409.02062} {arXiv:2409.02062 [hep-lat]} \BibitemShut {NoStop}%
\bibitem [{\citenamefont {Golterman}\ and\ \citenamefont {Shamir}(2026)}]{Golterman:2025boq}%
  \BibitemOpen
  \bibfield  {author} {\bibinfo {author} {\bibfnamefont {M.}~\bibnamefont {Golterman}}\ and\ \bibinfo {author} {\bibfnamefont {Y.}~\bibnamefont {Shamir}},\ }\bibfield  {title} {\bibinfo {title} {{Constraints on the symmetric mass generation paradigm for lattice chiral gauge theories}},\ }\href {https://doi.org/10.1103/qd1t-wzy1} {\bibfield  {journal} {\bibinfo  {journal} {Phys. Rev. D}\ }\textbf {\bibinfo {volume} {113}},\ \bibinfo {pages} {014503} (\bibinfo {year} {2026})},\ \Eprint {https://arxiv.org/abs/2505.20436} {arXiv:2505.20436 [hep-lat]} \BibitemShut {NoStop}%
\bibitem [{\citenamefont {Maiti}\ \emph {et~al.}(2026)\citenamefont {Maiti}, \citenamefont {Banerjee}, \citenamefont {Chandrasekharan},\ and\ \citenamefont {Marinkovic}}]{Maiti:2026log}%
  \BibitemOpen
  \bibfield  {author} {\bibinfo {author} {\bibfnamefont {S.}~\bibnamefont {Maiti}}, \bibinfo {author} {\bibfnamefont {D.}~\bibnamefont {Banerjee}}, \bibinfo {author} {\bibfnamefont {S.}~\bibnamefont {Chandrasekharan}},\ and\ \bibinfo {author} {\bibfnamefont {M.~K.}\ \bibnamefont {Marinkovic}},\ }\bibfield  {title} {\bibinfo {title} {{Phase diagram of a lattice fermion model with symmetric mass generation}},\ }\href@noop {} {\  (\bibinfo {year} {2026})},\ \Eprint {https://arxiv.org/abs/2602.18360} {arXiv:2602.18360 [hep-lat]} \BibitemShut {NoStop}%
\bibitem [{\citenamefont {Li}\ \emph {et~al.}(2026)\citenamefont {Li}, \citenamefont {Yu}, \citenamefont {Li},\ and\ \citenamefont {Yin}}]{Li:2026gvn}%
  \BibitemOpen
  \bibfield  {author} {\bibinfo {author} {\bibfnamefont {Z.-X.}\ \bibnamefont {Li}}, \bibinfo {author} {\bibfnamefont {Y.-K.}\ \bibnamefont {Yu}}, \bibinfo {author} {\bibfnamefont {Z.-X.}\ \bibnamefont {Li}},\ and\ \bibinfo {author} {\bibfnamefont {S.}~\bibnamefont {Yin}},\ }\bibfield  {title} {\bibinfo {title} {{Symmetric Mass Generation Transition and its Nonequilibrium Critical Dynamics in a Bilayer Honeycomb Lattice Model}},\ }\href@noop {} {\  (\bibinfo {year} {2026})},\ \Eprint {https://arxiv.org/abs/2603.22736} {arXiv:2603.22736 [cond-mat.str-el]} \BibitemShut {NoStop}%
\bibitem [{\citenamefont {Hasenfratz}\ and\ \citenamefont {Xu}(2026)}]{Hasenfratz:2026orf}%
  \BibitemOpen
  \bibfield  {author} {\bibinfo {author} {\bibfnamefont {A.}~\bibnamefont {Hasenfratz}}\ and\ \bibinfo {author} {\bibfnamefont {C.}~\bibnamefont {Xu}},\ }\bibfield  {title} {\bibinfo {title} {{A Guide to Symmetric Mass Generation in Lattice-QCD}},\ }\href@noop {} {\  (\bibinfo {year} {2026})},\ \Eprint {https://arxiv.org/abs/2604.02424} {arXiv:2604.02424 [hep-lat]} \BibitemShut {NoStop}%
\bibitem [{\citenamefont {Butt}\ \emph {et~al.}(2026)\citenamefont {Butt}, \citenamefont {Catterall}, \citenamefont {Hartshaw},\ and\ \citenamefont {Hasenfratz}}]{butt2026searchingsymmetricmassgeneration}%
  \BibitemOpen
  \bibfield  {author} {\bibinfo {author} {\bibfnamefont {N.}~\bibnamefont {Butt}}, \bibinfo {author} {\bibfnamefont {S.}~\bibnamefont {Catterall}}, \bibinfo {author} {\bibfnamefont {G.}~\bibnamefont {Hartshaw}},\ and\ \bibinfo {author} {\bibfnamefont {A.}~\bibnamefont {Hasenfratz}},\ }\href {https://arxiv.org/abs/2608.18239} {\bibinfo {title} {Searching for symmetric mass generation with staggered fermions in four dimensions}} (\bibinfo {year} {2026}),\ \Eprint {https://arxiv.org/abs/2608.18239} {arXiv:2608.18239 [hep-lat]} \BibitemShut {NoStop}%
\bibitem [{\citenamefont {'t~Hooft}(1980)}]{tHooft:1979rat}%
  \BibitemOpen
  \bibfield  {author} {\bibinfo {author} {\bibfnamefont {G.}~\bibnamefont {'t~Hooft}},\ }\bibfield  {title} {\bibinfo {title} {{Naturalness, chiral symmetry, and spontaneous chiral symmetry breaking}},\ }\href {https://doi.org/10.1007/978-1-4684-7571-5_9} {\bibfield  {journal} {\bibinfo  {journal} {NATO Sci. Ser. B}\ }\textbf {\bibinfo {volume} {59}},\ \bibinfo {pages} {135} (\bibinfo {year} {1980})}\BibitemShut {NoStop}%
\bibitem [{\citenamefont {Tong}\ and\ \citenamefont {Turner}(2020)}]{Tong:2019bbk}%
  \BibitemOpen
  \bibfield  {author} {\bibinfo {author} {\bibfnamefont {D.}~\bibnamefont {Tong}}\ and\ \bibinfo {author} {\bibfnamefont {C.}~\bibnamefont {Turner}},\ }\bibfield  {title} {\bibinfo {title} {{Notes on 8 Majorana Fermions}},\ }\href {https://doi.org/10.21468/SciPostPhysLectNotes.14} {\bibfield  {journal} {\bibinfo  {journal} {SciPost Phys. Lect. Notes}\ }\textbf {\bibinfo {volume} {14}},\ \bibinfo {pages} {1} (\bibinfo {year} {2020})},\ \Eprint {https://arxiv.org/abs/1906.07199} {arXiv:1906.07199 [hep-th]} \BibitemShut {NoStop}%
\bibitem [{\citenamefont {Shiozaki}\ \emph {et~al.}(2018)\citenamefont {Shiozaki}, \citenamefont {Shapourian}, \citenamefont {Gomi},\ and\ \citenamefont {Ryu}}]{Shiozaki:2017ive}%
  \BibitemOpen
  \bibfield  {author} {\bibinfo {author} {\bibfnamefont {K.}~\bibnamefont {Shiozaki}}, \bibinfo {author} {\bibfnamefont {H.}~\bibnamefont {Shapourian}}, \bibinfo {author} {\bibfnamefont {K.}~\bibnamefont {Gomi}},\ and\ \bibinfo {author} {\bibfnamefont {S.}~\bibnamefont {Ryu}},\ }\bibfield  {title} {\bibinfo {title} {{Many-body topological invariants for fermionic short-range entangled topological phases protected by antiunitary symmetries}},\ }\href {https://doi.org/10.1103/PhysRevB.98.035151} {\bibfield  {journal} {\bibinfo  {journal} {Phys. Rev. B}\ }\textbf {\bibinfo {volume} {98}},\ \bibinfo {pages} {035151} (\bibinfo {year} {2018})},\ \Eprint {https://arxiv.org/abs/1710.01886} {arXiv:1710.01886 [cond-mat.str-el]} \BibitemShut {NoStop}%
\bibitem [{\citenamefont {Araki}\ \emph {et~al.}(2026{\natexlab{a}})\citenamefont {Araki}, \citenamefont {Fukaya}, \citenamefont {Onogi},\ and\ \citenamefont {Yamaguchi}}]{Araki:2025xly}%
  \BibitemOpen
  \bibfield  {author} {\bibinfo {author} {\bibfnamefont {S.}~\bibnamefont {Araki}}, \bibinfo {author} {\bibfnamefont {H.}~\bibnamefont {Fukaya}}, \bibinfo {author} {\bibfnamefont {T.}~\bibnamefont {Onogi}},\ and\ \bibinfo {author} {\bibfnamefont {S.}~\bibnamefont {Yamaguchi}},\ }\bibfield  {title} {\bibinfo {title} {{Arf-Brown-Kervaire invariant on a lattice}},\ }\href {https://doi.org/10.1103/wyp3-bptq} {\bibfield  {journal} {\bibinfo  {journal} {Phys. Rev. D}\ }\textbf {\bibinfo {volume} {114}},\ \bibinfo {pages} {014503} (\bibinfo {year} {2026}{\natexlab{a}})},\ \Eprint {https://arxiv.org/abs/2512.11424} {arXiv:2512.11424 [hep-lat]} \BibitemShut {NoStop}%
\bibitem [{\citenamefont {Nagai}\ and\ \citenamefont {Tomiya}(2024)}]{Nagai:2024yaf}%
  \BibitemOpen
  \bibfield  {author} {\bibinfo {author} {\bibfnamefont {Y.}~\bibnamefont {Nagai}}\ and\ \bibinfo {author} {\bibfnamefont {A.}~\bibnamefont {Tomiya}},\ }\bibfield  {title} {\bibinfo {title} {{JuliaQCD: Portable lattice QCD package in Julia language}},\ }\href@noop {} {\  (\bibinfo {year} {2024})},\ \Eprint {https://arxiv.org/abs/2409.03030} {arXiv:2409.03030 [hep-lat]} \BibitemShut {NoStop}%
\bibitem [{\citenamefont {Araki}\ \emph {et~al.}(2026{\natexlab{b}})\citenamefont {Araki}, \citenamefont {Fukaya}, \citenamefont {Onogi},\ and\ \citenamefont {Yamaguchi}}]{Araki:2026prep}%
  \BibitemOpen
  \bibfield  {author} {\bibinfo {author} {\bibfnamefont {S.}~\bibnamefont {Araki}}, \bibinfo {author} {\bibfnamefont {H.}~\bibnamefont {Fukaya}}, \bibinfo {author} {\bibfnamefont {T.}~\bibnamefont {Onogi}},\ and\ \bibinfo {author} {\bibfnamefont {S.}~\bibnamefont {Yamaguchi}},\ }\bibfield  {title} {\bibinfo {title} {{Correlators of domain-wall fermions with symmetric mass generation}},\ }\href@noop {} {\  (\bibinfo {year} {2026}{\natexlab{b}})},\ \bibinfo {note} {in preparation}\BibitemShut {NoStop}%
\bibitem [{\citenamefont {Hernandez}\ \emph {et~al.}(1999)\citenamefont {Hernandez}, \citenamefont {Jansen},\ and\ \citenamefont {Luscher}}]{HernandezJansenLuscher}%
  \BibitemOpen
  \bibfield  {author} {\bibinfo {author} {\bibfnamefont {P.}~\bibnamefont {Hernandez}}, \bibinfo {author} {\bibfnamefont {K.}~\bibnamefont {Jansen}},\ and\ \bibinfo {author} {\bibfnamefont {M.}~\bibnamefont {Luscher}},\ }\bibfield  {title} {\bibinfo {title} {{Locality properties of Neuberger's lattice Dirac operator}},\ }\href {https://doi.org/10.1016/S0550-3213(99)00213-8} {\bibfield  {journal} {\bibinfo  {journal} {Nucl. Phys. B}\ }\textbf {\bibinfo {volume} {552}},\ \bibinfo {pages} {363} (\bibinfo {year} {1999})},\ \Eprint {https://arxiv.org/abs/hep-lat/9808010} {arXiv:hep-lat/9808010} \BibitemShut {NoStop}%
\end{thebibliography}%


\begin{thebibliography}{9}
\bibitem{HernandezJansenLuscher}
P.~Hern\'andez, K.~Jansen, and M.~L\"uscher,
\emph{Locality properties of Neuberger's lattice Dirac operator},
Nucl. Phys. B \textbf{552} (1999) 363--378,
arXiv:hep-lat/9808010.
\end{thebibliography}

\end{document}